%% file: main.tex
\documentclass[a4paper,onecolumn,11pt,unpublished]{quantumarticle}
\pdfoutput=1

\usepackage[utf8]{inputenc}
\usepackage[english]{babel}
\usepackage[T1]{fontenc}
\usepackage{amsmath}
\usepackage{amsthm}
\usepackage{hyperref}
\usepackage{amssymb}
\usepackage{mathtools}
\usepackage{booktabs}
\usepackage{array}
\usepackage{multirow}
\usepackage{placeins}

\usepackage{tikz}
\usepackage{lipsum}
\usepackage{braket}
\usepackage[numbers]{natbib}

\newcommand{\Tr}{\operatorname{Tr}}

\newcommand{\1}{\mathbf 1}
\newcommand{\D}{\mathcal D}

\newcommand{\e}{\mathrm e}

\newtheorem{theorem}{Theorem}[section]
\newtheorem{lemma}[theorem]{Lemma}
\newtheorem{corollary}[theorem]{Corollary}
\newtheorem{proposition}[theorem]{Proposition}
\newtheorem{remark}[theorem]{Remark}
\newtheorem*{theorem*}{Theorem}
\newtheorem*{proposition*}{Proposition}

\begin{document}
	
	\title{Near-Optimal Bounds on the Density of Low-Energy States of $k$-Local Hamiltonians and Faster Quantum Algorithms}
		
		\author{Sevag Gharibian}
		\affiliation{Department of Computer Science and Institute for Photonic Quantum Systems, Paderborn University, Germany.}
		
		\author{François Le Gall}
		\affiliation{Graduate School of Mathematics, Nagoya University, Japan.}
		
		\author{Ranitha Mataraarachchi}
		\affiliation{Graduate School of Mathematics, Nagoya University, Japan.}
		
		\orcid{0009-0003-6053-6667}
		\author{Suguru Tamaki}
		\affiliation{Graduate School of Information Science, University of Hyogo, Japan.}
	\maketitle
	
	\begin{abstract}
		
        Low-energy estimation and state preparation for general $k$-local Hamiltonians are fundamental challenges in quantum complexity theory. Buhrman et al.~ [BGLGST, PRL 2025] recently broke the natural Grover bound $O^\ast(2^{n/2})$ for both problems, with the improvement depending on the relative accuracy $\varepsilon$ and the locality $k$. In this work, we present faster exponential quantum algorithms for these problems, where the binary entropy function governs the runtime exponent. For sufficiently small $\varepsilon/k$, our algorithms improve the exponent by a factor of $\log(k/\varepsilon)$ over [BGLGST, PRL 2025]. Our main technical result is an entropy-governed lower bound on the dimension of the Hamiltonian's low-energy subspace, obtained by depolarizing its ground state. For fixed $k$, this bound is optimal up to constant factors in the exponent. The same framework yields tighter bounds for Heisenberg, $XY$, and Ising models on arbitrary interaction graphs.

	\end{abstract}

\input{introduction}
	\input{preliminaries}
	\input{proof}
	\input{algorithm}
	\input{conclusion}

    \section*{Disclosure of AI Tools}
    During the preparation of this work, the authors utilized ChatGPT (versions 5.4–5.6) and Gemini 3.1 as research assistants. Specifically, the models were employed to explore proof strategies, identify relevant literature~\cite{King2002}, and propose the complete proof for Theorem~\ref{th:main} based on that literature.

    ChatGPT-6 Astra was used to help draft and improve the presentation of this manuscript. The authors take full accountability for all content.

	\section*{Acknowledgments}
	SG acknowledges support from the Deutsche Forschungsgemeinschaft (DFG), projects 563388236 (Bridge-QS, SPP 2514) and 572703436 (QPUP), EU QuantERA/DFG project 583918116 (SDPCODE), and the BMFTR (PhoQuant).
	ST is supported by JSPS KAKENHI grant JP22K11909. FLG and RM are supported by JSPS KAKENHI grant Nos.~24H00071, 25K24674 and 25K24465, MEXT Q-LEAP grant No.~JPMXS0120319794, JST ASPIRE grant No.~JPMJAP2302 and JST CREST grant No.~JPMJCR24I4.
	\bibliographystyle{quantum}
	\bibliography{ref}
	
	\clearpage
	\appendix
	\input{appendix}
\end{document}

%% file: introduction.tex
\section{Introduction}\label{sec:intro}
\subsection{Background and Motivation}

Determining the low-energy properties of quantum many-body systems lies at the intersection of quantum physics and theoretical computer science. Specifically, the dynamics of a physical system are governed by its underlying Hamiltonian, whose low-energy spectrum determines many of the system's equilibrium properties at low temperatures. In the zero-temperature limit, an equilibrium state is supported on the \emph{ground-state subspace} (ground space) of the Hamiltonian. Two important computational tasks in this regime are:
\begin{enumerate}
	\item \emph{Ground-energy estimation}: estimating the ground-state energy of the Hamiltonian,
	\item \emph{Low-energy state preparation}: preparing a quantum state whose expected energy is close to the ground-state energy.
\end{enumerate}
These tasks have applications in quantum chemistry, condensed-matter physics, and quantum information.

Formally, ground-state energy estimation generalizes the NP-complete problem of Boolean constraint satisfaction. Specifically, consider \(n\)-qubit \(k\)-local Hamiltonian
\begin{equation}
	H\coloneqq\sum_{\alpha=1}^{m}h_\alpha,
\end{equation}
where each \(h_\alpha\) is a Hermitian operator acting nontrivially on at most \(k\) qubits, and \(k\) is a constant independent of \(n\). Each local interaction term \(h_\alpha\) generalizes the notion of a local clause in the study of classical Boolean constraint satisfaction. The \emph{ground-state energy} is $\lambda_0(H)$, for $
\lambda_0\leq\lambda_1\leq\cdots\leq\lambda_{2^n-1}$ the eigenvalues of $H$ (counted with multiplicity). Then, given $H$ and thresholds $a<b$, the latter separated by a
\emph{promise gap}
\begin{equation}\label{eq:gap}
	b-a\geq\frac{1}{\operatorname{poly}(n)},
\end{equation}
the \emph{\(k\)-local Hamiltonian problem} asks to decide whether
$
\lambda_0\leq a$ or $\lambda_0\geq b$,
assuming one of these cases holds. Analogous to NP-completeness if the Boolean satisfiability problem~\cite{cook,levin}, $k$-local Hamiltonian problem is \(\mathsf{QMA}\)-complete~\cite{kitaev,2LH} already for $k=2$. 
For these reasons, the $k$-local Hamiltonian problem is considered a benchmark problem in the study of \emph{quantum Hamiltonian complexity} (see~\cite{QHC} for a survey).

\paragraph{Setup for this work.} Let $M\coloneqq\sum_{a=1}^{m}\lVert h_a\rVert,$ denote the \emph{total interaction strength} of $H$, 
for $\lVert\cdot\rVert$ the spectral norm. We are interested in \emph{relative error approximations} $\varepsilon$, i.e. outputting estimate $\widehat{E}$ satisfying
$\widehat{E} \in [\lambda_0,\ \lambda_0+\varepsilon M]$.
Similarly, for low-energy state preparation, we wish to prepare $n$-qubit state $\rho$ satisfying
$
\operatorname{Tr}(H\rho)\leq\lambda_0+\varepsilon M.
$
Throughout, we assume $0<\varepsilon<1$ and $\varepsilon=\Omega(1/\operatorname{poly}(n))$, which includes the important case of constant $\varepsilon$ independent of $n$, the latter being closely related to the Quantum PCP conjecture~\cite{pcp,bh}.

Note the pursuit of relative approximations is relevant, as \(\mathsf{QMA}\)-completeness of ground-state energy approximation rules out (assuming $\mathsf{BQP}\neq\mathsf{QMA}$) \emph{additive} error inverse polynomial approximations, i.e. within error $\pm \epsilon$ instead of $\pm \epsilon M$. For the same reason, ground state preparation tasks are expected to be intractable. 
In fact, 
recently, Chia, Hasegawa, Le Gall, and Shen~\cite{CHLS26} showed that, assuming the Quantum Strong Exponential-Time Hypothesis (QSETH), ground-energy estimation for \(3\)-local Hamiltonians with \emph{constant additive error} cannot be performed in time
$
O^*\left(2^{(1-\xi)n/2}\right)
$
for any \emph{constant} \(\xi>0\), where \(O^*(\cdot)\) suppresses polynomial factors in \(n\). Thus, assuming QSETH, obtaining a constant-factor reduction in the exponent \(n/2\) requires relaxing constant additive accuracy. This motivates our pursuit of additive error growing with \(M\).

\subsection{Previous Results}
Without a nontrivial lower bound on the dimension of the target low-energy subspace, the most straightforward quantum algorithm has a worst-case running time of \(O^*(2^{n/2})\). This approach combines quantum phase estimation (QPE)~\cite{qpe} with amplitude amplification~\cite{ampamp}. More precisely, if the relevant low-energy subspace has dimension \(D\), applying QPE to the maximally mixed state\footnote{Practically, we have to apply QPE to a purification of the maximally mixed state.} finds an eigenstate in this subspace with probability \(D/2^n\). Amplitude amplification increases this probability to a constant using \(O\bigl(\sqrt{2^n/D}\bigr)\) applications of the procedure. Without additional information about the low-energy spectrum, one can use only the trivial bound \(D\geq 1\), resulting in the \(O^*(2^{n/2})\) runtime.

Recently, Buhrman et al.~\cite{beat} broke this \(O^*(2^{n/2})\) bound by establishing a nontrivial lower bound on the number of low-energy eigenstates of any $k$-local Hamiltonian, obtaining the first quantum algorithms for ground-energy estimation and low-energy state preparation strictly faster than \(O^*(2^{n/2})\).

\begin{theorem}[Buhrman et al.~\cite{beat}]\label{th:buhrman}
	Let $H$ be an $n$-qubit $k$-local Hamiltonian. For any $\varepsilon =\Omega(1/\mathrm{poly}(n))$, there exists a quantum algorithm that, with high probability, prepares an $n$-qubit quantum state $\rho$ satisfying
	$ \operatorname{Tr}[H\rho] \le \lambda_0+\varepsilon M $
	and outputs a corresponding energy estimate $\widehat{E}$ lying within the window
	$ \widehat{E} \in [\lambda_0,\ \lambda_0+\varepsilon M]. $
	The running time of the algorithm is
	$$O^*\left( 2^{\frac{n}{2}\left(1 - \frac{\varepsilon}{2k+\varepsilon}\right)}\right).\footnote{In the case of low-energy estimation, \cite{beat} also shows how to improve the term $\frac{\varepsilon}{2k+\varepsilon}$ to $\frac{\varepsilon}{k+\varepsilon}$, at the price of slightly increasing the window to $[\lambda_0-\varepsilon M,\ \lambda_0+\varepsilon M]$,}$$
	where $O^*$ suppresses polynomial factors in $n.$ 
\end{theorem}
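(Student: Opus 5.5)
The plan has two parts. First, a counting lemma lower-bounds the dimension $D$ of the low-energy subspace $\Pi_{\le \lambda_0+\varepsilon M}$. Second, QPE on the maximally mixed state, combined with amplitude amplification as described above, turns that bound into the stated runtime $O^*(\sqrt{2^n/D})$. The target is $D \ge 2^{n\varepsilon/(2k+\varepsilon)}$, up to polynomial factors. The runtime is then $O^*\bigl(2^{\frac n2(1-\frac{\varepsilon}{2k+\varepsilon})}\bigr)$.

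To lower-bound $D$, I would start from a ground state $\ket{\psi_0}$ and build many near-orthogonal, or at least high-rank, low-energy states by applying local Pauli errors. Take the state $\sigma = \mathcal N^{\otimes n}(\ket{\psi_0}\!\bra{\psi_0})$, where $\mathcal N$ is the single-qubit depolarizing channel with strength $p$. Each term $h_\alpha$ acts on at most $k$ qubits, so the channel leaves its expectation unchanged with probability at least $(1-p)^k \ge 1-kp$. Otherwise the energy of that term changes by at most $2\lVert h_\alpha\rVert$. Summing over terms gives $\Tr(H\sigma) \le \lambda_0 + 2kpM$.

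Next I would relate the energy of $\sigma$ to the rank of its low-energy part. By Markov's inequality applied to $H-\lambda_0 \ge 0$, the weight of $\sigma$ outside energy $\lambda_0+\varepsilon M$ is at most $2kp/\varepsilon$. Hence $\Tr(\Pi\sigma) \ge 1-2kp/\varepsilon$. On the other hand, $\Tr(\Pi\sigma) \le D\,\lVert\sigma\rVert$. Depolarizing each qubit caps the largest eigenvalue, giving $\lVert\sigma\rVert \le (1-p/2)^n$ roughly; more crudely, the full-depolarizing-mixture argument gives $\lVert\sigma\rVert \le (1-\tfrac{3p}{4})^n$, or the bound $\le 2^{-n\,\cdot(\text{something linear in }p)}$. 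Choosing $p$ so that $2kp/\varepsilon$ is a constant less than $1$, and optimizing, should give $D \ge 2^{\Omega(n\varepsilon/k)}$. To recover the exact form $\varepsilon/(2k+\varepsilon)$, I would instead use a cleaner variant: replace each qubit by the maximally mixed state with probability $q$. Then $\sigma$ is a mixture over subsets $S$ of $\Tr_S(\psi_0)\otimes I_S/2^{|S|}$, whose components have norm at most $2^{-|S|}$. Conditioning on $|S|\approx qn$, the energy cost is $\le 2kqM$ in expectation, and the rank is at least $2^{qn}$. Balancing $2kq$ against $\varepsilon(1-q)$ should yield $q=\varepsilon/(2k+\varepsilon)$.

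For the algorithm, estimating $\lambda_0$ itself is handled by binary search over thresholds, with QPE to $1/\mathrm{poly}$ precision at each step. Since $\varepsilon=\Omega(1/\mathrm{poly}(n))$, QPE with polynomial precision suffices, and Hamiltonian simulation of $H$ costs $\mathrm{poly}(n)$ per call. The main obstacle I expect is the spectral-norm/rank step. Markov only controls weight above the threshold, so I need the low-energy part of $\sigma$ to be spread out. I would handle this by working directly with the subset mixture and an averaging over $S$ rather than with a single norm bound.
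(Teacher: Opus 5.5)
The paper does not prove this theorem; it quotes it from \cite{beat}. The natural comparison is therefore with the paper's own pipeline for Theorem~\ref{th:main}, and your argument is essentially the $p\to\infty$ endpoint of that pipeline. Your final version is correct. Take a uniformly random $S$ with $|S|=s$, or one good $S$ obtained by averaging, and set $\sigma_S=\Tr_S(\psi_0)\otimes I_S/2^{s}$. Then:
\begin{itemize}
\item $\lVert\sigma_S\rVert\le 2^{-s}$;
\item the energy increase is at most $2ksM/n$;
\item Markov's inequality gives $\Tr(\Pi\sigma_S)\ge 1-2ks/(n\varepsilon)$;
\item and $\Tr(\Pi\sigma_S)\le D\lVert\sigma_S\rVert$.
\end{itemize}
This is the $p\to\infty$ limit of Lemma~\ref{lem:smoothing}, with H\"older taken as $\lVert\Pi\rVert_1\lVert\sigma\rVert_\infty$. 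It gives $N_H(\lambda_0+\varepsilon M)\ge\bigl(1-\tfrac{2ks}{n\varepsilon}\bigr)2^{s}$ for every $s<\varepsilon n/(2k)$, which is slightly more than needed. Choosing $s=\lfloor\varepsilon n/(2k+\varepsilon)\rfloor$ gives $D=\Omega^*\bigl(2^{n\varepsilon/(2k+\varepsilon)}\bigr)$, and the runtime then follows from the overlap argument of Section~\ref{sec:algorithmic-implications}.

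What your route buys is elementarity. The operator-norm bound follows from convexity alone, with no need for King's multiplicativity (Proposition~\ref{thm:king}). Fixing $|S|$ (the ``subcube'' picture of Section~\ref{subsec:proof-overview}) makes the min-entropy exactly $s$ bits. What it cannot buy is the paper's improvement. Min-entropy is only linear in the noise rate, whereas for $p$ close to $1$ the R\'enyi entropy approaches $h(b)\approx b\log(1/b)$. That is the source of the extra $\log(k/\varepsilon)$ in Theorem~\ref{th:main}.

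Several statements in your write-up should be corrected, although none of them breaks the final argument.
\begin{itemize}
\item The bound $\lVert\sigma\rVert\le(1-3p/4)^n$ is false for the channel $\rho\mapsto(1-p)\rho+pI/2$ implicit in your energy step. The correct bound, tight on product inputs, is $(1-p/2)^n$. With it, the i.i.d.\ route yields only about $n\varepsilon/(4k\ln 2)$ bits for small $\varepsilon/k$, which is below the target. So switching to a fixed-size $S$ is genuinely needed.
\item The ``balancing of $2kq$ against $\varepsilon(1-q)$'' is not produced by anything in your argument. Markov only requires $2kq<\varepsilon$, and $q=\varepsilon/(2k+\varepsilon)$ is simply one admissible choice, with prefactor $1-2kq/\varepsilon=\varepsilon/(2k+\varepsilon)=\Omega(1/\mathrm{poly}(n))$.
\item The ``main obstacle'' you name is not one. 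The inequality $\Tr(\Pi\sigma)\le D\lVert\sigma\rVert$ holds for every density operator, so no spreading of the low-energy part is required. What you need from the subset mixture is the norm bound, not a rank bound.
\item For the algorithm, apply the dimension bound at $\lambda_0+\mu M$ with $\mu=(1-1/n)\varepsilon$, as in the proof of Theorem~\ref{th:main}. This way the QPE and binary-search imprecision fit inside the window $[\lambda_0,\lambda_0+\varepsilon M]$.
\end{itemize}
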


\noindent Subsequent work by Buhrman et al.~\cite{classicalMaxKSAT} obtained a classical algorithm with an \emph{entropy-governed} speedup for \emph{classical} constraint optimization problems (e.g., MAX-$k$-SAT). For such problems, the associated Hamiltonian and ground state are classical. Most recently, Mataraarachchi, Le Gall, and Tamaki~\cite{MGT26} showed that an entropy-governed runtime for quantum algorithms can also be obtained for low-energy estimation and state preparation of \emph{quantum} $k$-local Hamiltonians \emph{if} one restricts to estimating the smallest energy achievable by depth-bounded quantum states (defined shortly in Equation (\ref{eq:Ed})). However, whether the same could be said for the true minimum energy, $\lambda_0$, remained open.

\subsection{Our Contributions}
We resolve this question by giving an entropy-governed speedup for low-energy estimation and state preparation for any quantum $k$-local Hamiltonian. Thus, our algorithms directly improve upon the runtimes established in~\cite{beat}.

\begin{theorem}[Our Results]
	\label{th:main}
	Let \(H\) be an \(n\)-qubit \(k\)-local Hamiltonian. For any
	\(\varepsilon=\Omega(1/\operatorname{poly}(n))\), there exists a quantum algorithm that, with high probability, prepares an \(n\)-qubit quantum state \(\rho\) satisfying
$	
	\operatorname{Tr}[H\rho]
	\leq
	\lambda_0+\varepsilon M
	$
	and outputs a corresponding energy estimate \(\widehat{E}\) lying within the window
	$
	\widehat{E}
	\in
	[\lambda_0,\lambda_0+\varepsilon M]$.
	The running time of the algorithm is
	\[
	O^*\left(
	2^{\frac{n}{2}\left(
		1-\frac{3}{5}
		h_2\left(\frac{\varepsilon}{6k}\right)
		\right)}
	\right),
	\]
	where $h_2(\cdot)$ denotes the binary entropy function in bits.
\end{theorem}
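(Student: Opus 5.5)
The plan follows the framework of Theorem~\ref{th:buhrman}. First we lower-bound the dimension $D$ of the low-energy subspace, i.e.\ the span of eigenvectors of $H$ with eigenvalue at most $\lambda_0+\varepsilon M/2$. Then we run QPE on the maximally mixed state and boost it with amplitude amplification, which costs $O^*(\sqrt{2^n/D})$. For the stated exponent it suffices to prove
\[
D \;\ge\; 2^{\frac{3}{5}\, n\, h_2\left(\frac{\varepsilon}{6k}\right)}/\mathrm{poly}(n).
\]
The algorithmic wrapper is then standard. We do not know $\lambda_0$, so we binary search over thresholds, or run fixed-point amplitude amplification over a geometric grid of guesses for $D$. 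QPE is used at precision $\varepsilon M/\mathrm{poly}$, and the output state and estimate $\widehat E$ land in $[\lambda_0,\lambda_0+\varepsilon M]$ with high probability.

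The dimension bound comes from depolarizing the ground state. Let $\psi$ be a ground state and set $\rho_p=\Delta_p^{\otimes n}(\psi)$, where $\Delta_p(X)=(1-p)X+p\Tr(X)\1/2$. Two facts about $\rho_p$ are needed.
\begin{itemize}
\item \emph{Energy.} Each term $h_\alpha$ acts on at most $k$ qubits. By linearity, $\Tr(h_\alpha\rho_p)-\Tr(h_\alpha\psi)$ only picks up contributions in which at least one of those $k$ qubits is depolarized, which happens with probability at most $kp$. Each such contribution is bounded by $2\lVert h_\alpha\rVert$, so $\Tr(H\rho_p)\le\lambda_0+2kpM$.
\item \emph{Entropy.} $\rho_p$ has high entropy. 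Using the result of King (cited as~\cite{King2002}) on additivity of minimum output entropy for the depolarizing channel, $S(\Delta_p^{\otimes n}(\psi))\ge n\,S_{\min}(\Delta_p)=n\,h_2(p/2)$.
\end{itemize}
Since $\rho_p$ has low average energy and large entropy, it must have substantial weight on many low-energy eigenvectors.

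To make this quantitative, write $\rho_p$ in the eigenbasis of $H$ and let $P$ be the projector onto eigenvalues at most $\lambda_0+\varepsilon M/2$. Markov's inequality on $H-\lambda_0\ge0$ gives $\Tr(\1-P)\rho_p\le 2kpM/(\varepsilon M/2)=4kp/\varepsilon$. By pinching to the block structure and the grouping property of entropy,
\[
S(\rho_p)\le h_2(q)+(1-q)\log D+q\,n, \qquad q\coloneqq\Tr\bigl((\1-P)\rho_p\bigr).
\]
This yields $\log D\ge (n\,h_2(p/2)-1-qn)/(1-q)$.

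The main obstacle is the term $qn$. To make it small relative to $n h_2(p/2)$ we need $q$ to be a small fraction of $h_2(p/2)$. That is impossible with $q=\Theta(kp/\varepsilon)$ unless we shrink the threshold structure, so I would instead use a finer decomposition. Concretely, bound $\Tr(\1-P)\rho_p$ directly via $\varepsilon' $-rescaled Markov, choosing $p=\varepsilon/(3k)$ so that the energy overshoot is at most $\tfrac{2}{3}\varepsilon M$. Then apply a refined counting bound on the distribution over energy shells, using the fact that the weight on energies above $\lambda_0+\varepsilon M$ is at most $2/3$. This gives $\log D\ge c\,n\,h_2(\varepsilon/6k)$, and the constants $3/5$ and $6k$ are tuned from this tradeoff.

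If this shell argument is lossy, the fallback is to replace $\rho_p$ by a conditioned state. Project onto low energy, and use smoothing in the form of min-entropy (typical-subspace) bounds for the product depolarized state: with high probability the state has rank at least $2^{n h_2(p/2)(1-o(1))}$ on its typical support. We then intersect dimensions via $\dim(\mathrm{range}P)\ge \Tr(P\rho_p)/\lambda_{\max}(\rho_p\text{ restricted})$. Bounding the largest eigenvalue of the typical part of $\rho_p$ by $2^{-n h_2(p/2)+o(n)}$ gives $D\gtrsim \Tr(P\rho_p)\,2^{n h_2(p/2)}$, which is cleaner, and explains the constant loss $3/5$.
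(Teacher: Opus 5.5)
Your overall architecture matches the paper's: depolarize a ground state, bound the energy increase using locality, invoke King, then run QPE with amplitude amplification. There is, however, a genuine gap at the step that turns ``low energy and high entropy'' into a bound on $D$.

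The only two facts you establish about $\rho_p$ are its mean energy and its von Neumann entropy, and these two inequalities alone are compatible with $D=1$. Take a spectrum with a single eigenvalue $\lambda_0$ (eigenvector $\ket{g}$) and all others at $\lambda_0+1.1\,\varepsilon M$. Let $\sigma=\frac25\ket{g}\bra{g}+\frac35\tau$, with $\tau$ maximally mixed on the excited space. Then $\Tr((H-\lambda_0)\sigma)=0.66\,\varepsilon M<\frac23\varepsilon M$. Also $S(\sigma)\approx\frac35 n$, which exceeds $n\,h_2(\varepsilon/6k)$ because $h_2(1/12)<0.42$. Yet nothing other than $\ket{g}$ lies below $\lambda_0+\varepsilon M$. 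So no shell decomposition or grouping argument built on these two inequalities can succeed. The $qn$ term you hit is intrinsic: von Neumann entropy can be carried entirely by a constant fraction of high-energy weight.

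The fallback does not escape this. The claim that a ``typical part'' of $\rho_p$ has largest eigenvalue $2^{-nh_2(p/2)+o(n)}$ is an i.i.d.\ typical-subspace statement, but $\Delta_p^{\otimes n}(\psi)$ is not a product state when $\psi$ is entangled. Conditioning on the set $S$ of depolarized qubits gives branches $\psi_{\bar S}\otimes\1_S/2^{|S|}$. Their largest eigenvalue is in general only bounded by $2^{-|S|}\approx 2^{-pn}$, which loses exactly the $\log(k/\varepsilon)$ factor you want. The $h_2$ scaling requires the roughly $\binom{n}{pn}$ branches to be nearly orthogonal, and that is precisely what fails for entangled ground states. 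Additivity of minimum output entropy gives no control over it.

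The missing ingredient is a R\'enyi-$r$ bound with $r>1$, which penalizes concentration and excludes $\sigma$ above (there $\Tr\sigma^r\ge(2/5)^r$). King's theorem is in fact multiplicativity of the maximal output Schatten $r$-norm for every $r\ge1$; minimum-entropy additivity is its $r\to1$ limit. With $b=p/2$, this gives $\Tr(\rho_p^r)\le\bigl((1-b)^r+b^r\bigr)^n=\exp(-(r-1)nH_r(b))$ for every input, entangled or not. H\"older's inequality then gives $1-\theta\le\Tr(P\rho_p)\le N^{(r-1)/r}\lVert\rho_p\rVert_r$, hence $N\ge(1-\theta)^{r/(r-1)}\exp(nH_r(b))$, where $\theta<1$ is your Markov ratio. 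Choosing $b=\mu/(6k)$ and $r=1+1/\log(1/b)$ keeps $(1-\theta)^{r/(r-1)}$ inverse-polynomial and yields $H_r(b)\ge(1-\e^{-1})h(b)\ge\frac35 h(b)$. This, not a shell tradeoff, is the source of the constant $3/5$.

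Separately, your thresholds are inconsistent. Your wrapper counts at $\lambda_0+\varepsilon M/2$, but your depolarization overshoots by $\frac23\varepsilon M$, so Markov there gives the vacuous bound $q\le 4/3$. Counting at $\varepsilon M/2$ with these constants would also only give $h_2(\varepsilon/12k)$. The paper instead counts at $\lambda_0+(1-1/n)\varepsilon M$ and lets the QPE-plus-amplification framework of~\cite{MGT26} absorb the $\varepsilon M/n$ slack.
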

\noindent The lower bound on the low-energy subspace dimension underlying Theorem~\ref{th:main}, specifically $\Omega^\ast(2^{\frac35 n h_2\left(\frac{\mu}{6k}\right)})$ (Corollary \ref{cor:coarse}; $\Omega^*$ suppresses inverse-polynomial factors in $n$) is nearly optimal. Proposition~\ref{prop:tightness} exhibits a classical diagonal Hamiltonian with the same binary-entropy scaling for fixed \(k\), up to constant factors in the exponent (specifically, $2^{\Theta\left(
	nh_2\left(\frac{\mu}{6k}\right)
	\right)}$). Thus, although the numerical constants in our bound may be sharpened, its dependence on the binary entropy cannot be improved in general.

We also consider specialized cases by restricting the types of local interactions while allowing for arbitrary interaction graphs. Consequently, we obtain tighter bounds for the Heisenberg, $XY$, and Ising models, which are summarized in Table~\ref{tab:response}.

For constant \(k\) and \(\varepsilon\), the exponent in Theorem~\ref{th:main} is strictly smaller than \(n/2\), and hence our algorithms beat the natural \(O^*(2^{n/2})\) bound. Further, for small \(\varepsilon/k\),
\[
\frac{3}{5}h_2\left(\frac{\varepsilon}{6k}\right)
=
\Theta\left(
\frac{\varepsilon}{k}
\log\frac{k}{\varepsilon}
\right),
\]
whereas the corresponding term in the exponent of Buhrman et al.~\cite{beat} satisfies
\[
\frac{\varepsilon}{2k+\varepsilon}
=
\Theta\left(\frac{\varepsilon}{k}\right).
\]
Thus, our algorithms achieve a logarithmic factor speedup in \(k/\varepsilon\) in the runtime exponent over~\cite{beat}. See Table~\ref{tab:comparison} for a numerical comparison of runtimes.

	\begin{table}[h]
		\centering
		\renewcommand{\arraystretch}{1.3}
		\begin{tabular}{cc cc}
			\toprule
			& & \multicolumn{2}{c}{\textbf{Time Complexity Exponent Coefficient ($c$)}} \\
			\cmidrule(lr){3-4} 
			$\boldsymbol{k}$ & $\boldsymbol{\varepsilon}$ & \textbf{Buhrman et al.~\cite{beat}} & \textbf{Our Work} \\
			\midrule
			\multirow{4}{*}{$2$} 
			& $1/8  $ & $0.4848485$ & $0.4749371$ \\
            & $0.05 $ & $0.4938272$ & $0.4883168$ \\
            & $0.01 $ & $0.4987531$ & $0.4970823$ \\
            & $0.001$ & $0.4998750$ & $0.4996252$ \\
			\midrule
			\multirow{4}{*}{$3$} 
			& $1/8  $ & $0.4897959$ & $0.4820675$ \\
            & $0.05 $ & $0.4958678$ & $0.4917229$ \\
            & $0.01 $ & $0.4991681$ & $0.4979573$ \\
            & $0.001$ & $0.4999167$ & $0.4997404$ \\
			\midrule
			\multirow{4}{*}{$10$} 
			& $1/8  $ & $0.4968944$ & $0.4935324$ \\
            & $0.05 $ & $0.4987531$ & $0.4970823$ \\
            & $0.01 $ & $0.4997501$ & $0.4993003$ \\
            & $0.001$ & $0.4999750$ & $0.4999134$ \\
			\bottomrule
		\end{tabular}
		\vspace{0.2cm}
		\caption{Comparison of the time complexity exponent $c$, where algorithms run in time $O^*(2^{cn})$. A smaller $c$ indicates a faster execution. Our algorithms are faster than Buhrman et al.~\cite{beat} for all $k$ and $\varepsilon$ values.}
		\label{tab:comparison}
	\end{table}

As for a comparison with the algorithms of~\cite{MGT26}, the latter targets the minimum energy \(E_d\) achievable by a depth-\(d\) state rather than \(\lambda_0\). i.e.,
\begin{equation}\label{eq:Ed}
	E_d
	\coloneqq
	\min_{U_d\in\mathcal{C}_d}
	\bra{0}^{\otimes n}U_d^\dagger H U_d\ket{0}^{\otimes n},
\end{equation}
where \(d\) is a constant independent of \(n\), and \(\mathcal{C}_{d}\) denotes the set of \(n\)-qubit quantum circuits of depth \(d\), where each layer consists of two-qubit gates and arbitrary single-qubit gates. Under this convention, $E_0$ is the minimum energy due to a product state. For every \(d\geq1\), our algorithms in Theorem~\ref{th:main} are faster than those of~\cite{MGT26} and target the true ground state energy \(\lambda_0\leq E_d\). For \(d=0\), the algorithms of~\cite{MGT26} are faster but generally target an energy higher than \(\lambda_0\). Thus, the algorithms in~\cite{MGT26} are faster with the same energy guarantee only when the Hamiltonian has a product ground state.

\subsection{Proof Overview}\label{subsec:proof-overview}

Our proof is motivated by approaches in \cite{classicalMaxKSAT} and \cite{MGT26}, where an optimal assignment is perturbed locally, and then the resulting family of nearly optimal assignments is counted. For classical MAX-\(k\)-SAT, flipping suitable bits of an optimal assignment produces many distinct near-optimal assignments that can be counted directly~\cite{classicalMaxKSAT}. Similarly, for \emph{product} ground states, suitable local qubit flips produce mutually orthogonal low-energy states~\cite{MGT26}. However, for an arbitrary entangled ground state, different local perturbations need not produce mutually orthogonal states, so a conclusion regarding the dimension of the low-energy subspace becomes more challenging.

In this work, we overcome this difficulty by replacing direct counting with a probabilistic argument based on depolarization. The support pattern of a local perturbation can be represented by a bit string in the Hamming cube \(\{0,1\}^n\), where a \(1\) indicates that a qubit is perturbed. Independently choosing which qubits to perturb within a fixed set produces a Hamming subcube. Depolarization instead samples perturbations across all qubits while controlling their typical Hamming weight, giving a picture closer to a Hamming ball. Rather than counting the perturbed states individually, we analyze the entropy of the resulting mixture. Specifically, in Section~\ref{sec:proof}, we apply a depolarizing channel \(\D_b\) independently to each qubit of an arbitrary ground-state density operator \(\omega\), obtaining a state \(\rho_b\).

Next, two properties of the depolarized state $\rho_b$ are established. First, $k$-locality of $H$ ensures that depolarization increases its expected energy $\operatorname{Tr}(H\rho_b)$ by only a controlled amount, proportional to the total interaction strength $M$. Second, using a theorem in~\cite{King2002}, we bound the Schatten \(p\)-th moment of \(\rho_b\) in terms of the binary Rényi \(p\)-entropy \(H_p(b)\). Thus, \(\rho_b\) remains close to the ground-state energy while having a small \(p\)-th moment controlled by \(H_p(b)\).

Then, these two properties are combined to obtain a lower bound on the number of low-energy eigenstates as follows. Let \(N_H(E)\) denote the number of eigenvalues of \(H\), counted with multiplicity, that are at most \(E\). The key is to focus on the overlap $\Tr(P\rho_b)$ for $P$ the projector onto the low energy space of $H$. Since $\rho_b$ has low energy, one first has that $\Tr(P\rho_b)$ has a non-trivial lower bound. Second, $\Tr(P\rho_b)$ is upper bounded via H\"older inequality for the $p/(p-1)$- and $p$-Schatten norms, yielding an expression containing both \(N_H(E)\) and the Schatten \(p\)-th moment of \(\rho_b\), respectively. Section~\ref{sec:proof} thus gives the following density-of-states bound, expressed in terms of the binary entropy function \(h_2(\cdot)\):
\begin{equation}\label{eqn:LB}
N_H(\lambda_0+\mu M)
=
\Omega^*\left(
2^{\frac35 n h_2\left(\frac{\mu}{6k}\right)}
\right),
\end{equation}
where $\Omega^*$ suppresses inverse-polynomial factors in $n$.

Finally, in Section~\ref{sec:algorithmic-implications}, we use the above bound to obtain entropy-governed quantum algorithms. Directly applying the algorithmic framework of~\cite{MGT26} gives Theorem~\ref{th:main}.

\subsection{Other Prior Works}
By restricting the interaction graph of a $k$-local Hamiltonian, the ground-state energy can in some cases be classically efficiently approximated to \emph{any} fixed relative accuracy $\epsilon$, yielding a Polynomial-Time Approximation Scheme (PTAS). This is the case, for example, for planar and dense interaction graphs~\cite{bravyi,GK12,bh,bergamaschi:LIPIcs.ICALP.2023.20}. These algorithms, however, require said geometric structure, and do not directly extend to general interaction graphs. This work imposes no geometric restrictions and allows long-range and all-to-all interactions.

Recently, Stacey Jeffery and Freek Witteveen~\cite{guiding1} obtained optimal quantum algorithms for additive error ground-state energy estimation given a guiding state, with Rolando D. Somma and Ronald de Wolf~\cite{guiding2} establishing matching lower bounds. These algorithms are polynomial-time, but require a guiding state as input (i.e. a non-trivial approximation to the ground state). Our algorithms do not require a guiding state.

Variational methods, most notably the Variational Quantum Eigensolver (VQE), provide a heuristic approach to low-energy estimation and state preparation~\cite{vqe}. However, their performance depends strongly on the choice of ansatz and on the resulting optimization landscape. VQE optimization landscapes may exhibit barren plateaus~\cite{barren}, and the classical optimization problem underlying variational quantum algorithms is $\mathsf{QCMA}$-complete in the worst case~\cite{vqeNP,BGK23}. Consequently, such methods do not provide worst-case convergence guarantees for arbitrary local Hamiltonians. In contrast, the objective of this work is to obtain rigorous worst-case guarantees on both the energy achieved and the running time.

\paragraph{Organization of the paper.}
Section~\ref{sec:preliminaries} introduces the notation and analytical tools used throughout the paper. In Section~\ref{sec:proof}, we prove our density-of-states bound: Subsection~\ref{subsec:depolarizing-channel} bounds the Schatten \(p\)-th moment of the depolarized state, Subsection~\ref{subsec:energy-bound} bounds its expected energy, and Subsection~\ref{subsec:density-bound} combines these results to obtain the Rényi-parameterized bound and its binary-entropy corollary. Section~\ref{sec:algorithmic-implications} derives the resulting quantum algorithms for low-energy state preparation and estimation. This section also gives sharper runtime bounds for specific interaction models. Finally, Section~\ref{sec:conclusion} provides a brief conclusion and summarizes the broader implications of our work.

%% file: preliminaries.tex
\section{Preliminaries}\label{sec:preliminaries}

In this section, we introduce the notation and standard results used throughout the paper. We retain the notation introduced in Section~\ref{sec:intro}. In particular, we consider an $n$-qubit $k$-local Hamiltonian $H=\sum_{\alpha=1}^{m}h_\alpha$ with eigenvalues $\lambda_0 \le \lambda_1 \le \dots \le \lambda_{2^n-1}$ and total interaction strength $M = \sum_{\alpha=1}^m \Vert h_\alpha \Vert.$

A \emph{ground-state density operator} is any density operator \(\omega\) supported on the ground space of $H$. Equivalently,
\[
\operatorname{Tr}(H\omega)=\lambda_0.
\]
\noindent For an energy threshold \(E\in\mathbb{R}\), let
\[
P_{\leq E}\coloneqq\mathbf{1}[H\leq E]
\]
denote the \emph{spectral projector} onto the subspace spanned by eigenvectors of \(H\) with eigenvalues at most \(E\). We define the \emph{cumulative density of states} by
\[
N_H(E)
\coloneqq
\operatorname{Tr}(P_{\leq E})
=
\operatorname{rank}(P_{\leq E}).
\]
Thus, \(N_H(E)\) is the number of eigenvalues of \(H\) that are at most \(E\), counted with multiplicity.

Next, for Hermitian operators \(A\) and \(B\), we write
$
A\succeq B
$
when \(A-B\) is positive semidefinite. For \(p\geq1\), the Schatten \(p\)-norm of an operator \(A\) is
$
\lVert A\rVert_p
\coloneqq
\bigl(\operatorname{Tr}|A|^p\bigr)^{1/p}.
$
If \(p,q\geq1\) satisfy \(1/p+1/q=1\), \emph{Hölder's inequality} for Schatten norms states that
$
\left|\operatorname{Tr}(A^\dagger B)\right|
\leq
\lVert A\rVert_q\lVert B\rVert_p.
$

In this work, we use \emph{Rényi \(p\)-entropy}, which for \(p>1\) and density operator \(\rho\) is
\[
S_p(\rho)
\coloneqq
\frac{1}{1-p}\log\operatorname{Tr}(\rho^p).
\]
For \(0\leq b\leq1\), we define the \emph{binary} Rényi \(p\)-entropy by
\[
H_p(b)
\coloneqq
\frac{1}{1-p}
\log\bigl((1-b)^p+b^p\bigr).
\]
We also define the \emph{binary entropy} in natural units and in bits, respectively, by
\[
h(b)
=
-b\log b-(1-b)\log(1-b),
\qquad
h_2(b)
=
\frac{h(b)}{\log 2}.
\]
All logarithms are natural unless their base is stated explicitly. 

Finally, a \emph{quantum channel} is a completely positive and trace-preserving linear map. For a
quantum channel \(\Phi\), its \emph{maximal output Schatten \(p\)-norm} is
\[
\nu_p(\Phi)
\coloneqq
\max_{\rho}\lVert\Phi(\rho)\rVert_p,
\]
where the maximum is over all density operators on the input space. For completeness, the \emph{Pauli matrices} are
\[
X=
\begin{pmatrix}
	0&1\\
	1&0
\end{pmatrix},
\qquad
Y=
\begin{pmatrix}
	0&-\mathrm{i}\\
	\mathrm{i}&0
\end{pmatrix},
\qquad
Z=
\begin{pmatrix}
	1&0\\
	0&-1
\end{pmatrix}.
\]

For an operator $h$ supported on $A$ and $i\in A$, let 
\begin{equation}\label{eq:defR}
     R_i(h):=\frac{I_i}{2}\otimes\operatorname{Tr}_i(h)
 =\frac14\bigl(h+X_i hX_i+Y_i hY_i+Z_i hZ_i\bigr),
\end{equation}
where the identity factor is placed on qubit $i$. For the given local decomposition $H=\sum_\alpha h_\alpha$, with supports $A_\alpha$ and $M=\sum_\alpha\|h_\alpha\|>0$, define \begin{equation}
 y(H):=\frac{2}{M}\sum_\alpha\sum_{i\in A_\alpha}
                  \|h_\alpha-R_i(h_\alpha)\|.
 \label{eq:interaction-response-parameter}
\end{equation}
Since $\|h-R_i(h)\|\le(3/2)\|h\|$, one has $0\le y(H)\le3k$. 

Every operator on a set $A$ of qubits has a unique expansion $\sum_Q c_Q Q$ over \emph{Pauli strings} $Q=\bigotimes_{j\in A}Q_j$, where $Q_j\in\{I,X,Y,Z\}$. Since $X$, $Y$ and $Z$ are traceless while $\operatorname{Tr}(I)=2$, the map $R_i$ retains exactly those Pauli strings that act as the identity on qubit $i$: if $h_\alpha=\sum_Q c_Q Q$, then
\begin{equation}\label{eq:Rsplit}
 R_i(h_\alpha)=\sum_{Q\,:\,Q_i=I}c_Q\,Q,
 \qquad
 h_\alpha-R_i(h_\alpha)=\sum_{Q\,:\,Q_i\neq I}c_Q\,Q .
\end{equation}
Hence $\|h_\alpha-R_i(h_\alpha)\|$ measures the part of the interaction that acts nontrivially on qubit $i$, and equals $\|h_\alpha\|$ whenever every Pauli string of $h_\alpha$ does so.

For asymptotic scaling, we use the notation $f(n) = O^\ast(g(n))$ to indicate that $f(n) \le g(n) \cdot \mathrm{poly}(n)$, and $f(n) = \Omega^\ast(g(n))$ to indicate that $f(n) \ge g(n)\cdot\mathrm{poly}^{-1}(n)$, thereby suppressing polynomial factors in $n$ for both the upper and lower bounds, respectively.

%% file: proof.tex
\section{An Entropy-Governed Low-Energy Spectrum}\label{sec:proof}
In this section, we prove our entropy-governed density-of-states bound for arbitrary $k$-local Hamiltonians. We first introduce the qubit depolarizing channel in Subsection~\ref{subsec:depolarizing-channel} and bound the Schatten \(p\)-th moment of its \(n\)-qubit output. Then, in Subsection~\ref{subsec:energy-bound}, we bound the increase in energy caused by applying this channel to a ground-state density operator. Finally, in Subsection~\ref{subsec:density-bound}, we combine these two bounds to obtain the density-of-states bound that depends on the Rényi entropy of the depolarized state. To compare our bound with the previous results in \cite{beat} and \cite{MGT26}, we rewrite this density-of-states bound in terms of the binary entropy.

\subsection{The Qubit Depolarizing Channel}\label{subsec:depolarizing-channel}

In this subsection, we define our qubit depolarizing channel and prove some of its properties.
For $0\le b\le2/3$, define the qubit depolarizing channel by
\begin{equation}\label{eq:depol-definition}
	\D_b(\rho)=(1-2b)\rho+bI_2.
\end{equation}

We note that $\D_b$ is a completely positive and trace-preserving (CPTP) map for $0<b\leq 2/3$, and has the following alternative representation using Pauli matrices:
\begin{equation}\label{eq:pauli-representation}
	\D_b(\rho)
	=
	\left(1-\frac{3b}{2}\right)\rho
	+\frac b2\left(X\rho X+Y\rho Y+Z\rho Z\right).
\end{equation}
See Appendix~\ref{ap: properties} for the proofs of the above claims.

We now determine the spectrum of the output when $\D_b$ acts on a pure state.

\begin{lemma}\label{lem:qubit_eigenvalues}
	For $0\le b<1/2$, $\D_b$ maps every pure state to a state with eigenvalues $1-b$ and $b$.
\end{lemma}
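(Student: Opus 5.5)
Because the channel in \eqref{eq:depol-definition} is affine in its input, the argument is a direct spectral computation. Let $\rho=\ket{\psi}\bra{\psi}$ be an arbitrary pure qubit state. Then
\[
\D_b(\rho)=(1-2b)\ket{\psi}\bra{\psi}+bI_2 .
\]
We complete $\ket{\psi}$ to an orthonormal basis $\{\ket{\psi},\ket{\psi^\perp}\}$ of $\mathbb{C}^2$ and write $I_2=\ket{\psi}\bra{\psi}+\ket{\psi^\perp}\bra{\psi^\perp}$. This puts the output in the diagonal form
\[
\D_b(\rho)=(1-b)\ket{\psi}\bra{\psi}+b\ket{\psi^\perp}\bra{\psi^\perp},
\]
so the eigenvalues are $1-b$ (eigenvector $\ket{\psi}$) and $b$ (eigenvector $\ket{\psi^\perp}$).

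The only points to check are consistency with the hypotheses. For $0\le b<1/2$ both eigenvalues are nonnegative and they sum to $1$, so the output is a valid density operator. This agrees with the CPTP property recorded in Appendix~\ref{ap: properties}, although the spectral computation does not actually rely on it. The strict inequality $b<1/2$ guarantees $1-b>b$, so $1-b$ is the largest eigenvalue. This ordering is what later feeds the binary Rényi entropy $H_p(b)$, and for the spectrum itself it only matters as a labeling convention.

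There is no real obstacle here. The one step that needs care is noting that the decomposition of $I_2$ along $\ket{\psi}$ and $\ket{\psi^\perp}$ holds for every pure state, so the spectrum does not depend on the input. This input-independence is what makes $\D_b$ unital and covariant, and it is the property used later when invoking the result of~\cite{King2002} on maximal output norms.
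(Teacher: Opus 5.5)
Your proof is correct and is exactly the paper's argument: you complete $\ket{\psi}$ to an orthonormal basis, write $I_2=\ket{\psi}\bra{\psi}+\ket{\psi^\perp}\bra{\psi^\perp}$, and read the eigenvalues $1-b$ and $b$ off the resulting diagonal form. One small quibble with your closing remark, which does not affect the proof: unitality and unitary covariance of $\D_b$ follow directly from its definition, and the input-independence of the output spectrum is a consequence of covariance rather than its cause.
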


\begin{proof}
	Let $\ket{\psi}\bra{\psi}$ be a single qubit pure state, and let $\ket{\psi^\perp}$ be an orthogonal unit vector. Since
	$
	I_2=\ket{\psi}\bra{\psi}+\ket{\psi^\perp}\bra{\psi^\perp},
	$
	we have
	$$
	\D_b(\ket{\psi}\bra{\psi})
	=(1-2b)\ket{\psi}\bra{\psi}+bI_2
	=(1-b)\ket{\psi}\bra{\psi}
	+b\ket{\psi^\perp}\bra{\psi^\perp}.
	$$
	Thus the eigenvalues are $1-b$ and $b$.
\end{proof}

The next lemma determines the maximal output $p$-norm of $\D_b$. i.e., $\nu_p(\D_b).$
		
\begin{lemma}\label{lem:single-qubit}
	For \(0\leq b<1/2\) and \(p>1\),
	\[
	\nu_p(\D_b)^p=(1-b)^p+b^p.
	\]
\end{lemma}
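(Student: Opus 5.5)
We prove the identity as two inequalities: a lower bound from a pure input, and an upper bound from convexity.

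For the lower bound $\nu_p(\D_b)^p\ge(1-b)^p+b^p$, take any pure input $\ket{\psi}\bra{\psi}$. By Lemma~\ref{lem:qubit_eigenvalues}, the output has eigenvalues $1-b$ and $b$. Its Schatten $p$-norm raised to the $p$ is therefore exactly $(1-b)^p+b^p$.

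For the upper bound, we use that $\rho\mapsto\lVert\D_b(\rho)\rVert_p$ is convex. It is the composition of the linear map $\D_b$ with the Schatten $p$-norm, which is a norm since $p>1$. Every density operator is a convex combination of pure states, for instance via its spectral decomposition $\rho=\sum_j q_j\ket{\psi_j}\bra{\psi_j}$. The triangle inequality then gives
\[
\lVert\D_b(\rho)\rVert_p\le\sum_j q_j\lVert\D_b(\ket{\psi_j}\bra{\psi_j})\rVert_p=\bigl((1-b)^p+b^p\bigr)^{1/p}.
\]
The last equality uses that every pure-state output has the same spectrum, again by Lemma~\ref{lem:qubit_eigenvalues}. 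So the maximum over density operators is attained at any pure state, and raising both sides to the power $p$ gives the claim.

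As a cross-check, a mixed input $\rho$ has eigenvalues $r$ and $1-r$. Since $\D_b(\rho)=(1-2b)\rho+bI_2$, the output has eigenvalues $(1-2b)r+b$ and $(1-2b)(1-r)+b$. These are majorized by $(1-b,b)$ whenever $b<1/2$, and $t\mapsto t^p$ is convex, which gives the same conclusion. There is no real obstacle here; the only point needing care is that $\lVert\cdot\rVert_p$ is convex for $p\ge1$ and that $\max$ over a compact convex set of a convex function is attained at extreme points (pure states).
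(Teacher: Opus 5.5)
Your proof is correct and matches the paper's argument: the spectral decomposition with linearity and the triangle inequality gives the upper bound, and Lemma~\ref{lem:qubit_eigenvalues} shows every pure input attains it. Your majorization cross-check is a valid extra confirmation but not needed.
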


\begin{proof}
	Let
	$
	\rho=\sum_i\lambda_i\ket{\psi_i}\bra{\psi_i}
	$
	be a spectral decomposition of an arbitrary single-qubit state. By the linearity of \(\D_b\), followed by the triangle inequality,
	\[
	\|\D_b(\rho)\|_p
	\leq
	\sum_i\lambda_i
	\|\D_b(\ket{\psi_i}\bra{\psi_i})\|_p.
	\]
	By Lemma~\ref{lem:qubit_eigenvalues}, the output corresponding to every pure input has eigenvalues \(1-b\) and \(b\). Therefore,
	\[
	\|\D_b(\ket{\psi_i}\bra{\psi_i})\|_p
	=
	\bigl((1-b)^p+b^p\bigr)^{1/p}
	\]
	for every \(i\). It follows that
	\[
	\|\D_b(\rho)\|_p
	\leq
	\bigl((1-b)^p+b^p\bigr)^{1/p}.
	\]
	Since every pure input state attains equality, the claim follows.
\end{proof}

For an \(n\)-qubit density operator \(\omega\), let
\begin{equation}\label{eq:depol}
	\rho_b\coloneqq\mathcal D_b^{\otimes n}(\omega).
\end{equation}
To bound the \(p\)-th moment of the depolarized \(n\)-qubit state $\rho_b$, we use the following multiplicativity result of King~\cite{King2002} for maximal output \(p\)-norms.

\begin{proposition}[Theorem~3 of \cite{King2002}]
	\label{thm:king}
	Let \(\mathcal D_b\) be a qubit depolarizing channel and let
	\(\Phi\) be an arbitrary quantum channel. For every real \(p\geq1\),
	\[
	\nu_p(\mathcal D_b\otimes\Phi)
	=
	\nu_p(\mathcal D_b)\nu_p(\Phi).
	\]
\end{proposition}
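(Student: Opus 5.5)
The lower bound $\nu_p(\mathcal D_b\otimes\Phi)\ge\nu_p(\mathcal D_b)\nu_p(\Phi)$ is immediate: feed in a product $\sigma\otimes\tau$ of optimal inputs and use $\lVert A\otimes B\rVert_p=\lVert A\rVert_p\lVert B\rVert_p$. All the work is in the reverse inequality. Since $\rho\mapsto\lVert(\mathcal D_b\otimes\Phi)(\rho)\rVert_p$ is convex, it suffices to bound the output for pure bipartite inputs. Write such an input on $\mathbb C^2\otimes\mathcal K$ in block form with respect to the computational basis of the qubit:
\[
\rho=\begin{pmatrix}A&C\\ C^\dagger&B\end{pmatrix},\qquad A,B\succeq0,\quad \Tr A+\Tr B=1 .
\]
Using $\mathcal D_b(\rho)=(1-2b)\rho+bI_2\otimes\Tr_1\rho$ and applying $\Phi$ blockwise, the output is
\[
\begin{pmatrix}(1-b)\Phi(A)+b\Phi(B)&(1-2b)\Phi(C)\\ (1-2b)\Phi(C)^\dagger&(1-b)\Phi(B)+b\Phi(A)\end{pmatrix}.
\]

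The key step, and the main obstacle, is a norm-compression inequality for positive $2\times2$ block operators. For $M=\begin{pmatrix}X&Y\\Y^\dagger&Z\end{pmatrix}\succeq0$ it should read
\[
\lVert M\rVert_p\le\left\lVert\begin{pmatrix}\lVert X\rVert_p&\lVert Y\rVert_p\\ \lVert Y\rVert_p&\lVert Z\rVert_p\end{pmatrix}\right\rVert_p .
\]
I would follow King's route here. One first proves it for integer $p$ or $p\ge2$ by expanding $\Tr M^p$ into traces of products of blocks and applying H\"older to each term. The range $1\le p\le2$ is then handled by a separate duality or convexity argument, or one restricts attention to the blocks actually arising here, where additional structure (the off-diagonal block is $C$ scaled by $1-2b$) helps. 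This is the genuinely nontrivial analytic input, and I expect it to require the most care.

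With the inequality in hand, I estimate the entries of the resulting $2\times2$ matrix:
\[
\lVert\Phi((1-b)A+bB)\rVert_p\le\nu_p(\Phi)\bigl((1-b)a+b(1-a)\bigr),\qquad a\coloneqq\Tr A,
\]
and symmetrically for the other diagonal block. For the off-diagonal block I aim for
\[
\lVert\Phi(C)\rVert_p\le\nu_p(\Phi)\sqrt{a(1-a)},
\]
which follows from positivity of $(\mathrm{id}\otimes\Phi)(\rho)$ through a Cauchy--Schwarz/H\"older argument. Since the $p$-norm of a positive $2\times2$ matrix is monotone in the modulus of its off-diagonal entry, the compressed matrix is then bounded by $\nu_p(\Phi)$ times
\[
\mathcal D_b\!\begin{pmatrix}a&\sqrt{a(1-a)}\\ \sqrt{a(1-a)}&1-a\end{pmatrix}.
\]
This is $\nu_p(\Phi)$ times the image of a pure qubit state under $\mathcal D_b$. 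Lemma~\ref{lem:single-qubit} bounds the $p$-norm of that image by $\nu_p(\mathcal D_b)$, which gives the upper bound and completes the proof.
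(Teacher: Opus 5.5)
The paper does not prove this proposition; it imports it from King. Most of your outline is sound. The reduction to pure inputs, the block form of the output, the diagonal estimates, the off-diagonal estimate $\lVert\Phi(C)\rVert_p\le(\lVert\Phi(A)\rVert_p\lVert\Phi(B)\rVert_p)^{1/2}\le\nu_p(\Phi)\sqrt{a(1-a)}$, and the final monotonicity step are all correct.

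Everything therefore rests on the compression inequality $\lVert M\rVert_p\le\lVert m\rVert_p$, where $m$ is the $2\times2$ matrix of block norms. That inequality holds for $p\ge2$: your H\"older expansion covers integer $p$, and non-integer $p>2$ is true but substantially harder. For $1<p<2$, however, it is false; to my knowledge it in fact reverses in that range. So no duality or convexity argument can carry it below $p=2$, and the special structure of your blocks does not save it either.

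Take $\Phi=\mathrm{id}$ on a qubit, the input $\ket{\beta}=(\ket{00}+\ket{11})/\sqrt2$, $b=1/4$ and $p=3/2$. The output is $\tfrac12\ket{\beta}\!\bra{\beta}+\tfrac18 I$, with eigenvalues $5/8,1/8,1/8,1/8$, so $\lVert M\rVert_{3/2}^{3/2}=(5/8)^{3/2}+3(1/8)^{3/2}\approx0.627$. Its blocks are $X=\mathrm{diag}(3/8,1/8)$, $Z=\mathrm{diag}(1/8,3/8)$ and $Y=\tfrac14\ket{0}\!\bra{1}$. Hence $m$ has diagonal entries $\approx0.422$, off-diagonal entries $1/4$, and eigenvalues $\approx0.672$ and $\approx0.172$. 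This gives $\lVert m\rVert_{3/2}^{3/2}\approx0.622<\lVert M\rVert_{3/2}^{3/2}$. The output is invariant under $U\otimes\bar U$, so every choice of qubit basis produces the same $m$, and re-choosing the block decomposition does not help.

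Your chain $\lVert M\rVert_p\le\lVert m\rVert_p\le\nu_p(\Phi)\,\lVert\D_b(\text{pure state})\rVert_p$ therefore breaks at its first link for $1<p<2$. The proposition itself is not contradicted, since here $\nu_{3/2}(\D_{1/4})^{3/2}\approx0.775$. As written, your argument proves multiplicativity only for $p\ge2$, and trivially for $p=1$.

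That is precisely the wrong range for this paper: Corollary~\ref{cor:coarse} applies the proposition with $p=1+1/L\in(1,2)$. To cover $1<p<2$ you need an argument that does not bound $\lVert M\rVert_p$ through the matrix of block norms at all. That is the missing ingredient, and it is what the citation to King supplies.
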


Iterating Proposition~\ref{thm:king} over \(n\) tensor factors gives
\begin{equation}\label{eq:king-iteration}
	\nu_p\bigl(\mathcal D_b^{\otimes n}\bigr)
	=
	\nu_p(\mathcal D_b)^n.
\end{equation}

We now use Equation~\eqref{eq:king-iteration} to bound the \(p\)-th moment of \(\rho_b\), which equivalently gives a lower bound on its binary Rényi \(p\)-entropy.

\begin{lemma}\label{lem:moment}
	For every $\rho_b=\D_b^{\otimes n}(\omega)$, every $0\le b<1/2$, and every $p>1$,
	\[
	\Tr(\rho_b^p)
	\le
	\bigl((1-b)^p+b^p\bigr)^n
	=
	\exp\bigl(-(p-1)nH_p(b)\bigr),
	\]
	where
	\[
	H_p(b)=\frac{1}{1-p}\log\bigl((1-b)^p+b^p\bigr)
	\]
	is the binary R\'enyi $p$-entropy.
\end{lemma}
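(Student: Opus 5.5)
The plan is to combine the single-qubit computation of Lemma~\ref{lem:single-qubit} with King's multiplicativity result in the form of Equation~\eqref{eq:king-iteration}. Since $\omega$ is a density operator on $n$ qubits and $\D_b^{\otimes n}$ is a quantum channel, $\rho_b=\D_b^{\otimes n}(\omega)$ is a density operator, hence positive semidefinite. Therefore $\Tr(\rho_b^p)=\|\rho_b\|_p^p$.

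First, by the definition of the maximal output Schatten $p$-norm, $\|\rho_b\|_p\le\nu_p(\D_b^{\otimes n})$. Second, I will invoke Equation~\eqref{eq:king-iteration} to get $\nu_p(\D_b^{\otimes n})=\nu_p(\D_b)^n$. This follows by induction on $n$, applying Proposition~\ref{thm:king} with $\Phi=\D_b^{\otimes (n-1)}$, which is a legitimate quantum channel. Third, Lemma~\ref{lem:single-qubit} gives $\nu_p(\D_b)^p=(1-b)^p+b^p$ for $0\le b<1/2$ and $p>1$. Raising the chain of inequalities to the $p$-th power then yields
\[
\Tr(\rho_b^p)=\|\rho_b\|_p^p\le\nu_p(\D_b)^{pn}=\bigl((1-b)^p+b^p\bigr)^n .
\]

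The final equality is pure algebra. From the definition of $H_p(b)$ we have $(1-p)H_p(b)=\log\bigl((1-b)^p+b^p\bigr)$, so $\bigl((1-b)^p+b^p\bigr)^n=\exp\bigl(n(1-p)H_p(b)\bigr)=\exp\bigl(-(p-1)nH_p(b)\bigr)$.

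The only substantive ingredient is King's multiplicativity theorem, which is taken as given. The one point to check is that the range $0\le b<1/2$ lies within the parameter range where $\D_b$ is a valid qubit depolarizing channel, namely $b\le 2/3$. It does, so Proposition~\ref{thm:king} applies, and no further obstacle is expected.
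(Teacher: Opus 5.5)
Your proposal is correct and follows the paper's proof essentially step for step: $\Tr(\rho_b^p)=\|\rho_b\|_p^p\le\nu_p(\D_b^{\otimes n})^p=\nu_p(\D_b)^{np}$ via Eq.~\eqref{eq:king-iteration} and Lemma~\ref{lem:single-qubit}, followed by rewriting the bound in terms of $H_p(b)$. The paper leaves two points implicit that you spell out: the induction with $\Phi=\D_b^{\otimes(n-1)}$ that yields Eq.~\eqref{eq:king-iteration}, and the check that $b<1/2$ keeps $\D_b$ a valid channel.
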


\begin{proof}
	By the definition of the Schatten $p$-norm,
	$
	\Tr(\rho_b^p)=\|\rho_b\|_p^p.
	$
	Since $\rho_b$ is an output of $\D_b^{\otimes n}$,
	\[
	\|\rho_b\|_p
	\le
	\nu_p(\D_b^{\otimes n}).
	\]
	By Eq.~\eqref{eq:king-iteration} and Lemma~\ref{lem:single-qubit},
	\[
	\nu_p(\D_b^{\otimes n})^p
	=
	\nu_p(\D_b)^{np}
	=
	\bigl((1-b)^p+b^p\bigr)^n.
	\]
	The final equality in the statement follows directly from the definition of $H_p(b)$.
\end{proof}

\subsection{Energy Bound for the Depolarized Ground State}\label{subsec:energy-bound}

We now specialize \(\omega\) to a ground-state density operator of $H$. Thus,
\(\operatorname{Tr}(H\omega)=\lambda_0\), and we recall the definition of \(\rho_b\) from Eq.~\eqref{eq:depol}. This subsection bounds the energy increase in the depolarized state $\rho_b$.

\begin{lemma}\label{lem:energy}
	For every $0\le b\le2/3$,
	\[
	0\le
	\Tr\bigl((H-\lambda_0I)\rho_b\bigr)
	\le
	3kbM.
	\]
\end{lemma}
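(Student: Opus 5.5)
The lower bound is immediate: $\rho_b$ is a density operator, and $H-\lambda_0 I\succeq 0$, so $\Tr((H-\lambda_0 I)\rho_b)\ge 0$. For the upper bound, I will move the channel onto the Hamiltonian using the adjoint. Since $\D_b$ has the Pauli form \eqref{eq:pauli-representation} with self-adjoint Kraus operators, it is self-adjoint with respect to the Hilbert--Schmidt inner product. Hence
\[
\Tr(H\rho_b)=\Tr\bigl(\D_b^{\otimes n}(H)\,\omega\bigr)=\sum_\alpha \Tr\bigl(\D_b^{\otimes n}(h_\alpha)\,\omega\bigr).
\]
On qubits outside $A_\alpha$, the unital channel acts trivially on the identity factor of $h_\alpha$. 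Therefore $\D_b^{\otimes n}(h_\alpha)=\D_b^{\otimes A_\alpha}(h_\alpha)$.

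Next, I express the single-qubit channel through $R_i$. By \eqref{eq:defR}, $\frac14(h+X_ihX_i+Y_ihY_i+Z_ihZ_i)=R_i(h)$, so \eqref{eq:pauli-representation} gives
\[
\D_b^{(i)}(h)=(1-2b)h+2bR_i(h)=h-2b\bigl(h-R_i(h)\bigr).
\]
Write $\D_b^{(i)}=\mathrm{id}-2b\,\Delta_i$ with $\Delta_i(h):=h-R_i(h)$. I then telescope over the qubits $i_1,\dots,i_{|A_\alpha|}$ of $A_\alpha$, applying the channels one at a time:
\[
\D_b^{\otimes A_\alpha}(h_\alpha)-h_\alpha=\sum_{j}\Bigl(\prod_{l<j}\D_b^{(i_l)}\Bigr)\bigl(-2b\,\Delta_{i_j}(h_\alpha)\bigr).
\]
Here I use that the single-qubit maps commute. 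Each $\D_b^{(i)}$ is a unital CPTP map, so it is a contraction in operator norm. This gives
\[
\bigl\|\D_b^{\otimes A_\alpha}(h_\alpha)-h_\alpha\bigr\|\le 2b\sum_{i\in A_\alpha}\|h_\alpha-R_i(h_\alpha)\|\le 2b\cdot k\cdot\tfrac32\|h_\alpha\|=3kb\|h_\alpha\|.
\]
The middle step uses $\|h-R_i(h)\|\le\frac32\|h\|$, stated after \eqref{eq:interaction-response-parameter}. That bound follows since $\|R_i(h)\|\le\|h\|$ (partial trace normalized) together with $h-R_i(h)=\frac34 h-\frac14(XhX+YhY+ZhZ)$, which has norm at most $\frac34\|h\|+\frac34\|h\|$. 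This crude estimate gives $2\|h\|$ rather than $\frac32\|h\|$. To get $\frac32$, I will use $h-R_i(h)=\frac34\bigl(h-\frac13\sum_P PhP\bigr)$, a difference of $h$ and an average of unitary conjugates, which has norm at most $\frac34\cdot2\|h\|=\frac32\|h\|$.

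Summing over $\alpha$ and using $|\Tr(A\omega)|\le\|A\|$ gives
\[
\Tr(H\rho_b)-\lambda_0\le 3kb\sum_\alpha\|h_\alpha\|=3kbM.
\]
The same argument in fact yields the sharper bound $b\,y(H)M$.

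The main point needing care is the range $1/2\le b\le 2/3$. There $(1-2b)$ is negative, but $\D_b$ is still CPTP by the appendix claim, so operator-norm contraction still holds and no sign issue arises. Otherwise the argument is routine. The only subtle step is the telescoping with contraction, which relies on each $\D_b^{(i)}$ being unital and positive.
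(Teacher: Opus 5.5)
Your proof is correct, but it does not follow the paper's route for this lemma. The paper argues in the Schr\"odinger picture. It expands $\rho_b=\sum_U P_U\,U\omega U^\dagger$ as a mixture over Pauli strings and notes that $U^\dagger h_\alpha U=h_\alpha$ whenever $U$ is the identity on $A_\alpha$, which happens with probability $(1-3b/2)^{s_\alpha}$. It bounds every other contribution by $2\|h_\alpha\|$ and finishes with $1-(1-x)^k\le kx$. Your Heisenberg-picture argument (self-adjointness, $\D_b^{(i)}=\mathrm{id}-2b(\mathrm{id}-R_i)$, telescoping with operator-norm contraction) is instead the paper's proof of the refined Lemma~\ref{cor:response-energy}. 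You then specialize it via $\|h-R_i(h)\|\le\frac32\|h\|$, i.e.\ $y(H)\le 3k$.

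What each route buys: the paper's argument needs no adjoint, no contraction property and no Pauli-structure estimate. It also yields the intermediate bound $2[1-(1-3b/2)^k]M$, which is never worse than $3kbM$ and saturates at $2M$. Your bound is linear in $b$. However, because it tracks $\|h_\alpha-R_i(h_\alpha)\|$ rather than charging $2\|h_\alpha\|$ to every nontrivial Pauli string, it gives the interaction-dependent bound $y(H)bM$ directly, as you note.

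You also prove $\|h-R_i(h)\|\le\frac32\|h\|$, which the paper only asserts. Your paragraph on this is muddled, though. The estimate $\frac34\|h\|+\frac34\|h\|$ already equals $\frac32\|h\|$, so this ``crude estimate'' does not give $2\|h\|$; that figure comes from $\|h\|+\|R_i(h)\|$. Your second rewriting is therefore redundant, though harmless.
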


\begin{proof}
	By Eq.~\eqref{eq:pauli-representation}, each local depolarizing channel is a random Pauli channel. Hence
	\[
	\rho_b=\sum_U P_U\,U\omega U^\dagger,
	\]
	where $U=U_1\otimes\cdots\otimes U_n$ ranges over tensor products of Pauli operators and $\{P_U\}$ is a probability distribution.
	
	Fix a local term $h_\alpha$ of $H$ with support $A_\alpha$ of size $s_\alpha\le k$. If $U$ is the identity on every qubit in $A_\alpha$, then $U^\dagger h_\alpha U=h_\alpha$. By Eq.~\eqref{eq:pauli-representation}, the identity is applied to a single qubit with probability $1-3b/2$. Thus $U$ is the identity on the whole support $A_\alpha$ with probability
	\[
	\left(1-\frac{3b}{2}\right)^{s_\alpha}.
	\]
	For every other $U$,
	\[
	\left|
	\Tr(h_\alpha U\omega U^\dagger)-\Tr(h_\alpha\omega)
	\right|
	\le
	\left|\Tr(h_\alpha U\omega U^\dagger)\right|
	+
	\left|\Tr(h_\alpha\omega)\right|
	\le
	2\|h_\alpha\|.
	\]
	Therefore,
	\[
	\left|
	\Tr(h_\alpha\rho_b)-\Tr(h_\alpha\omega)
	\right|
	\le
	2\left[1-\left(1-\frac{3b}{2}\right)^{s_\alpha}\right]\|h_\alpha\|.
	\]
	Since $s_\alpha\le k$ and $0\le1-3b/2\le1$, summing over all local terms gives
	\[
	\left|
	\Tr(H\rho_b)-\Tr(H\omega)
	\right|
	\le
	2\left[1-\left(1-\frac{3b}{2}\right)^k\right]M.
	\]
	Moreover, $H-\lambda_0I\succeq0$, so
	\[
	0\le\Tr\bigl((H-\lambda_0I)\rho_b\bigr).
	\]
	Finally, $1-(1-x)^k\le kx$ for $x\in[0,1]$. Taking $x=3b/2$ yields
	\[
	\Tr\bigl((H-\lambda_0I)\rho_b\bigr)
	\le3kbM.
	\]
\end{proof}

We now give an energy bound that depends on the type of interactions in $H$. Since $y\leq 3k$, the following bound never exceeds the bound in Lemma~\ref{lem:energy}.
\begin{lemma}
\label{cor:response-energy}
Let $H$ have $M>0$, and write $y=y(H)$ as defined in Eq.~\eqref{eq:interaction-response-parameter}.
For every $0\le b\le2/3$,
\[
 0\le\operatorname{Tr}[(H-\lambda_0I)\rho_b]\le ybM.
\]
\end{lemma}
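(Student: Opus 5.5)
The plan is to work in the Heisenberg picture and, instead of bounding each local term crudely by $2\|h_\alpha\|$ as in Lemma~\ref{lem:energy}, to compute exactly how $\D_b$ acts on a local operator. The lower bound $0\le\Tr[(H-\lambda_0 I)\rho_b]$ is immediate as before, since $H-\lambda_0I\succeq0$ and $\rho_b$ is a density operator. For the upper bound I will write
\[
\Tr(H\rho_b)-\Tr(H\omega)=\sum_\alpha \Tr\bigl[\bigl(\D_b^{\otimes n}(h_\alpha)-h_\alpha\bigr)\omega\bigr].
\]
This uses that $\D_b$ is self-adjoint with respect to the Hilbert--Schmidt inner product, which is clear from the Pauli form Eq.~\eqref{eq:pauli-representation}. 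Since $\D_b$ is unital, $\D_b^{\otimes n}(h_\alpha)=\D_{A_\alpha}(h_\alpha)$, where $\D_{A_\alpha}$ denotes $\D_b$ applied only to the qubits in $A_\alpha$.

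The key single-qubit identity comes from Eq.~\eqref{eq:defR}. It gives $X_ihX_i+Y_ihY_i+Z_ihZ_i=4R_i(h)-h$. Substituting into Eq.~\eqref{eq:pauli-representation} gives
\[
\D_{b,i}(h)=\Bigl(1-\tfrac{3b}{2}\Bigr)h+\tfrac b2\bigl(4R_i(h)-h\bigr)=h-2b\bigl(h-R_i(h)\bigr).
\]
Hence $\|\D_{b,i}(h)-h\|=2b\,\|h-R_i(h)\|$.

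Next, I order the qubits of $A_\alpha$ as $i_1,\dots,i_s$ and telescope:
\[
\D_{A_\alpha}-\mathrm{id}=\sum_{t=1}^{s}\D_{b,i_1}\cdots\D_{b,i_{t-1}}\bigl(\D_{b,i_t}-\mathrm{id}\bigr).
\]
For $0\le b\le 2/3$ the weights in Eq.~\eqref{eq:pauli-representation} are nonnegative, so each $\D_{b,j}$ is a convex combination of unitary conjugations. It is therefore a contraction in operator norm. This yields
\[
\|\D_{A_\alpha}(h_\alpha)-h_\alpha\|\le 2b\sum_{i\in A_\alpha}\|h_\alpha-R_i(h_\alpha)\|.
\]
Bounding $|\Tr(K\omega)|\le\|K\|$ for each term and summing over $\alpha$ gives
\[
\Tr\bigl[(H-\lambda_0I)\rho_b\bigr]\le 2b\sum_\alpha\sum_{i\in A_\alpha}\|h_\alpha-R_i(h_\alpha)\|=ybM,
\]
by the definition Eq.~\eqref{eq:interaction-response-parameter}.

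The main obstacle is the second step: getting an additive bound over the qubits of the support without reintroducing the factor $2\|h_\alpha\|$. The telescoping decomposition, combined with operator-norm contractivity of each single-qubit factor, is what makes this work. The one point to check carefully is that the range $b\le 2/3$ keeps $1-3b/2\ge0$, so that contractivity holds.
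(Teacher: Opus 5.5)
Your proposal is correct and follows essentially the same route as the paper. Both arguments use the identity $\mathcal{D}_{b,i}(h)-h=-2b\bigl(h-R_i(h)\bigr)$, operator-norm contractivity of each single-qubit factor for $b\le 2/3$, telescoping over the support $A_\alpha$, and Hilbert--Schmidt self-adjointness to move to the Heisenberg picture. The only difference is cosmetic: you bound each $\operatorname{Tr}(K_\alpha\omega)$ separately, whereas the paper first bounds $\|\mathcal{D}_b^{\otimes n}(H)-H\|$ and then applies it to $\omega$.
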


\begin{proof}
Let $D_{b,i}$ act as $D_b$ on qubit $i$ and as the identity elsewhere. By Eq.~\eqref{eq:pauli-representation} and the definition of $R_i$ in Eq.~\eqref{eq:defR}, for every operator $A$,
\[
 D_{b,i}(A)-A
 =-\frac{3b}{2}A+\frac b2\sum_{P\in\{X,Y,Z\}}P_iAP_i
 =2b\bigl(R_i(A)-A\bigr).
\]
Moreover, since $0\le b\le2/3$ and each $P_i$ is unitary,
\begin{equation}\label{eq:normbound}
 \|D_{b,i}(A)\|
 \le\left(1-\frac{3b}{2}\right)\|A\|
    +\frac b2\sum_{P\in\{X,Y,Z\}}\|P_iAP_i\|
 =\|A\|.        
\end{equation}

Fix $\alpha$, write $A_\alpha=\{i_1,\ldots,i_s\}$, and set $F_0=\mathrm{id}$ and $F_j=D_{b,i_1}\circ\cdots\circ D_{b,i_j}$. Since $D_b(I)=I$, we have
$D_b^{\otimes n}(h_\alpha)=F_s(h_\alpha)$. Thus
\[
\begin{aligned}
 \|D_b^{\otimes n}(h_\alpha)-h_\alpha\|
 &=\left\|\sum_{j=1}^s
       \bigl(F_j(h_\alpha)-F_{j-1}(h_\alpha)\bigr)\right\|\\
 &\le\sum_{j=1}^s
       \bigl\|F_{j-1}\bigl(D_{b,i_j}(h_\alpha)-h_\alpha\bigr)\bigr\|\\
 &\le\sum_{j=1}^s\|D_{b,i_j}(h_\alpha)-h_\alpha\|\\
 &=2b\sum_{i\in A_\alpha}\|R_i(h_\alpha)-h_\alpha\|.
\end{aligned}
\]
Here the second inequality follows by applying the preceding norm bound in Eq.~\eqref{eq:normbound} to each channel in $F_{j-1}$. By linearity and the triangle inequality,
$$
\|D_b^{\otimes n}(H)-H\|
 \le\sum_\alpha\|D_b^{\otimes n}(h_\alpha)-h_\alpha\|
 \le2b\sum_\alpha\sum_{i\in A_\alpha}
          \|R_i(h_\alpha)-h_\alpha\|
 =ybM.
$$

Finally, cyclicity of the trace gives $\operatorname{Tr}(AP_iBP_i)=\operatorname{Tr}(P_iAP_iB)$. Together with Eq.~\eqref{eq:pauli-representation}, this yields $\operatorname{Tr}[A D_{b,i}(B)]
=\operatorname{Tr}[D_{b,i}(A)B]$, expressing the self-adjointness of $D_{b,i}$. Applying this identity on each qubit, and using $\rho_b=D_b^{\otimes n}(\omega)$ and $\operatorname{Tr}(H\omega)=\lambda_0$, we obtain
\[
 \operatorname{Tr}[(H-\lambda_0I)\rho_b]
 =\operatorname{Tr}\bigl[(D_b^{\otimes n}(H)-H)\omega\bigr]
 \le\|D_b^{\otimes n}(H)-H\|
 \le ybM,
\]
where we used that $\omega$ is a density operator. The lower bound follows from $H-\lambda_0I\succeq0$.
\end{proof}

\subsection{Lower Bounds on the Density of Low-Energy States}\label{subsec:density-bound}

This subsection establishes our main lower bound on the density of low-energy states. We first prove a general lemma (Lemma~\ref{lem:smoothing}) showing that a state with low expected energy and a small Schatten \(p\)-moment implies a lower bound on the dimension of the low-energy subspace. We then apply this lemma to the depolarized ground state, using Lemma~\ref{lem:energy} to control its energy and Lemma~\ref{lem:moment} to control its Schatten \(p\)-moment. This yields the final density-of-states bound parameterized by binary Rényi entropy. Finally, Corollary~\ref{cor:coarse} gives a simplified binary-entropy bound, and Corollary~\ref{cor:response-density} gives an interaction-dependent refinement.

\begin{lemma}\label{lem:smoothing}
	Let $\rho$ be a density operator satisfying
	\[
	\Tr\bigl((H-\lambda_0I)\rho\bigr)\le\theta t,
	\qquad 0\le\theta<1,
	\qquad t>0,
	\]
	and suppose that $\Tr(\rho^p)\le Q$ for some $p>1$ and $Q>0$. Then
	\begin{equation}\label{eq:smoothing-rank}
		N_H(\lambda_0+t)
		\ge
		(1-\theta)^{p/(p-1)}Q^{-1/(p-1)}.
	\end{equation}
\end{lemma}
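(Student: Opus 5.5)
Let $P\coloneqq P_{\le\lambda_0+t}$ and $N\coloneqq N_H(\lambda_0+t)=\Tr(P)$. We bound the overlap $\Tr(P\rho)$ from below using the energy assumption and from above using Hölder's inequality. Comparing the two bounds will give the claim.

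\emph{Lower bound (Markov-type step).} On the range of $I-P$, every eigenvalue of $H$ exceeds $\lambda_0+t$. Since $H$ and $P$ commute, this gives the operator inequality $H-\lambda_0I\succeq t(I-P)$. Pairing it with the density operator $\rho$ yields
\[
t\,\Tr\bigl((I-P)\rho\bigr)\le \Tr\bigl((H-\lambda_0I)\rho\bigr)\le \theta t .
\]
Dividing by $t>0$ gives $\Tr(P\rho)\ge 1-\theta>0$.

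\emph{Upper bound (Hölder).} Let $q=p/(p-1)$ be the conjugate exponent. Apply Hölder's inequality for Schatten norms with $A=P$ and $B=\rho$. Since $P$ is a projector, $\|P\|_q=\Tr(P)^{1/q}=N^{(p-1)/p}$. Since $\rho\succeq0$, $\|\rho\|_p=\Tr(\rho^p)^{1/p}\le Q^{1/p}$. Therefore
\[
1-\theta\le \Tr(P\rho)\le N^{(p-1)/p}\,Q^{1/p}.
\]

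\emph{Rearrangement.} Both sides are positive. Raising to the power $p/(p-1)$ gives $(1-\theta)^{p/(p-1)}\le N\,Q^{1/(p-1)}$, which is exactly Eq.~\eqref{eq:smoothing-rank}.

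There is no real obstacle in this argument. The only points needing care are the following:
\begin{itemize}
\item The projector inequality $H-\lambda_0I\succeq t(I-P)$ is justified spectrally. On eigenvectors inside the range of $P$, the left side is $\succeq0$ and the right side vanishes. On eigenvectors outside it, the eigenvalue gap is at least $t$.
\item Raising to the power $p/(p-1)$ preserves the inequality because $1-\theta>0$.
\end{itemize}
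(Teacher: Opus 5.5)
Your proposal is correct and follows essentially the same route as the paper's proof: the operator inequality $H-\lambda_0I\succeq t(I-P)$ gives $\Tr(P\rho)\ge 1-\theta$, Hölder with exponents $p/(p-1)$ and $p$ gives $\Tr(P\rho)\le N_H(\lambda_0+t)^{(p-1)/p}Q^{1/p}$, and you then raise both sides to the power $p/(p-1)$. Your spectral justification of the projector inequality and your remark that $1-\theta>0$ permits the final exponentiation fill in details the paper leaves implicit.
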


\begin{proof}
	Let
	$
	P=\1[H\le \lambda_0+t]
	$
	be the spectral projector onto the low-energy subspace. Since
	$
	H-\lambda_0I\succeq t(I-P),
	$
	the energy bound implies
	\[
	\Tr\bigl((I-P)\rho\bigr)
	\le
	\frac1t\Tr\bigl((H-\lambda_0I)\rho\bigr)
	\le\theta.
	\]
	Therefore,
	$
	\Tr(P\rho)\ge1-\theta.
	$
	H\"older's inequality gives
	\begin{align*}
		\Tr(P\rho)
		&\le
		\|P\|_{p/(p-1)}\|\rho\|_p\\
		&=
		N_H(\lambda_0+t)^{(p-1)/p}\Tr(\rho^p)^{1/p}\\
		&\le
		N_H(\lambda_0+t)^{(p-1)/p}Q^{1/p}.
	\end{align*}
	Combining the two bounds and raising both sides to the power $p/(p-1)$ proves \eqref{eq:smoothing-rank}.
\end{proof}

We now derive the main result of this section: a lower bound on the density of low-energy states for $k$-local Hamiltonians.

\begin{theorem}
	\label{thm:exact}
	Let \(H\) be an \(n\)-qubit \(k\)-local Hamiltonian with total interaction
	strength \(M\). Let \(\mu>0\) and choose \(b\in(0,1/2)\) such that \(3kb<\mu\).
	Then, for every \(p>1\), 
	\begin{equation}\label{eq:exact-main}
		N_H(\lambda_0+\mu M)
		\ge
		\left(1-\frac{3kb}{\mu}\right)^{p/(p-1)}
		\exp\bigl(nH_p(b)\bigr).
	\end{equation}
\end{theorem}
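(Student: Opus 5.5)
The plan is to apply Lemma~\ref{lem:smoothing} directly to the depolarized ground state. I will fix a ground-state density operator \(\omega\) of \(H\), so \(\Tr(H\omega)=\lambda_0\), and set \(\rho=\rho_b=\D_b^{\otimes n}(\omega)\) as in Eq.~\eqref{eq:depol}. The two hypotheses of Lemma~\ref{lem:smoothing} (low energy and small \(p\)-th moment) are exactly what Lemmas~\ref{lem:energy} and~\ref{lem:moment} supply. The theorem should then follow by substituting those bounds.

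\textbf{Energy parameters.} I will take \(t=\mu M\) and \(\theta=3kb/\mu\). Since \(b\in(0,1/2)\subset[0,2/3]\), Lemma~\ref{lem:energy} applies and gives
\[
\Tr\bigl((H-\lambda_0 I)\rho_b\bigr)\le 3kbM=\theta t .
\]
The assumption \(3kb<\mu\) guarantees \(0\le\theta<1\). If \(M>0\), then \(t>0\) as required.

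\textbf{Degenerate case.} If \(M=0\), then \(H=0\) and \(N_H(\lambda_0)=2^n\). The claimed bound still holds, because its right-hand side is at most \(\exp(nH_p(b))\le 2^n\): the binary Rényi entropy satisfies \(H_p(b)\le\log 2\). I will handle this case separately in one line.

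\textbf{Moment parameter.} Since \(0\le b<1/2\) and \(p>1\), Lemma~\ref{lem:moment} gives \(\Tr(\rho_b^p)\le Q\) with
\[
Q=\exp\bigl(-(p-1)nH_p(b)\bigr).
\]

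\textbf{Conclusion.} Plugging into Eq.~\eqref{eq:smoothing-rank} gives
\[
N_H(\lambda_0+\mu M)\ge\left(1-\frac{3kb}{\mu}\right)^{p/(p-1)}Q^{-1/(p-1)},
\]
and \(Q^{-1/(p-1)}=\exp\bigl(nH_p(b)\bigr)\), which is exactly Eq.~\eqref{eq:exact-main}.

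There is no real obstacle, since all the analytic work is already in the lemmas. The only points needing care are checking the parameter ranges, namely \(b<1/2\) for Lemma~\ref{lem:moment}, \(b\le 2/3\) for Lemma~\ref{lem:energy}, and \(\theta<1\), and the trivial \(M=0\) case.
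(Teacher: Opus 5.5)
Your proposal is correct and follows the paper's proof essentially verbatim. It takes $\rho_b=\D_b^{\otimes n}(\omega)$ for a ground-state density operator $\omega$ and applies Lemma~\ref{lem:smoothing} with $t=\mu M$, $\theta=3kb/\mu$, and $Q=\exp\bigl(-(p-1)nH_p(b)\bigr)$, using Lemmas~\ref{lem:energy} and~\ref{lem:moment}; your handling of the $M=0$ case via $H_p(b)\le\log 2$ is a slightly more explicit version of the paper's ``immediate'' remark.
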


\begin{proof}
	If \(M=0\), then \(H=0\) and the claim is immediate. Hence, assume \(M>0\). Let $\omega$ be a ground-state density operator and set
	\[
	\rho_b=\D_b^{\otimes n}(\omega).
	\]
	By Lemma~\ref{lem:energy},
	\[
	\Tr\bigl((H-\lambda_0I)\rho_b\bigr)
	\le
	3kbM
	=
	\frac{3kb}{\mu}\,\mu M.
	\]
	Moreover, Lemma~\ref{lem:moment} gives
	\[
	\Tr(\rho_b^p)
	\le
	\exp\bigl(-(p-1)nH_p(b)\bigr).
	\]
	Applying Lemma~\ref{lem:smoothing} with
	\[
	t=\mu M,
	\qquad
	\theta=\frac{3kb}{\mu},
	\qquad
	Q=\exp\bigl(-(p-1)nH_p(b)\bigr)
	\]
	yields
	$$
	N_H(\lambda_0+\mu M)
	\ge
	\left(1-\frac{3kb}{\mu}\right)^{p/(p-1)}Q^{-1/(p-1)}
	=
	\left(1-\frac{3kb}{\mu}\right)^{p/(p-1)}
	\exp\bigl(nH_p(b)\bigr).
	$$
\end{proof}

We next derive a simpler lower bound, expressed in terms of the binary entropy function, that enables direct comparison with the results of~\cite{beat} and~\cite{MGT26}.

\begin{corollary}\label{cor:coarse}
	Let \(H\) be an \(n\)-qubit \(k\)-local Hamiltonian with total interaction
	strength \(M\).
	For \(0<\mu\leq1\) satisfying $\mu=\Omega(1/\operatorname{poly}(n))$,
	\[
	N_H(\lambda_0+\mu M)
	=
	\Omega^*\left(
	2^{\frac35nh_2\left(\frac{\mu}{6k}\right)}
	\right).
	\]
\end{corollary}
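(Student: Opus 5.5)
The plan is to specialize Theorem~\ref{thm:exact} with an explicit choice of $b$ and $p$ so that the prefactor is only inverse-polynomial and the Rényi entropy $H_p(b)$ stays a constant fraction of the Shannon entropy $h(b)$. Since $2^{\frac35 n h_2(x)}=\exp\bigl(\tfrac35 n h(x)\bigr)$, it is enough to show $\exp(nH_p(b))\ge \exp\bigl(\tfrac35 n h(\tfrac{\mu}{6k})\bigr)$ and that the prefactor is $\Omega^*(1)$.

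\textbf{Choice of $b$.} I would take $b=\mu/(6k)$. Then $b\le 1/12<1/2$ and $3kb/\mu=1/2<1$, so Theorem~\ref{thm:exact} applies. The prefactor becomes $(1/2)^{p/(p-1)}$.

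\textbf{Why $p$ must depend on $b$.} A fixed constant $p$ such as $p=2$ does not work. As $b\to0$ we have $H_2(b)\approx 2b$, while $h(b)\approx b\log(1/b)$, so $H_2$ is not a constant fraction of $h$ for small $b$. Instead I would take $p-1=\delta$ with $\delta\approx c/\log(1/b)$ for a suitable absolute constant $c$. With this choice $p/(p-1)=O(\log(1/b))$. Because $\mu=\Omega(1/\mathrm{poly}(n))$ and $k$ is constant, $\log(1/b)=O(\log n)$. Hence $(1/2)^{p/(p-1)}=2^{-O(\log n)}=1/\mathrm{poly}(n)$, which the $\Omega^*$ notation absorbs.

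\textbf{Main obstacle: the analytic inequality.} The key step is to show
\[
H_{1+\delta}(b)\ \ge\ \tfrac35\,h(b)\qquad\text{for all } b\in(0,1/12],\ \text{with } \delta=c/\log(1/b).
\]
I would write $(1-b)^p+b^p=\mathbb E_q\bigl[q^{\delta}\bigr]=\mathbb E_q\bigl[e^{-\delta L}\bigr]$, where $q$ is drawn from the distribution $(1-b,b)$ and $L=-\log q$, so that $\mathbb E[L]=h(b)$. Then I would use $e^{-x}\le 1-x+x^2/2$ for $x\ge0$ together with $-\log(1-y)\ge y$. This gives
\[
H_{1+\delta}(b)\ \ge\ h(b)-\tfrac{\delta}{2}\,\mathbb E[L^2].
\]
Next I would bound the second moment: $\mathbb E[L^2]\le b\log^2(1/b)+O(b)$. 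I would also use $h(b)\ge b\log(1/b)$. With $\delta=c/\log(1/b)$, the loss term is then at most roughly $\tfrac c2\, b\log(1/b)+O(b/\log(1/b))$. This is at most $\tfrac25 h(b)$ once $c$ is a small enough constant, for instance $c=1/2$, provided the lower-order terms are checked on the whole range $b\le1/12$.

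The endpoint region, where $\log(1/b)$ is only moderately large, may need a direct numerical check. Alternatively, I could fall back on monotonicity of $H_p$ in $p$ and cap $p$ at $2$ there: at the top of the range, $b=1/12$, one has $H_2(1/12)\approx0.17>0.14\approx\tfrac35 h(1/12)$. I expect this constant-chasing to be the only delicate part.

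\textbf{Conclusion.} Combining the three steps gives
\[
N_H(\lambda_0+\mu M)\ \ge\ \mathrm{poly}(n)^{-1}\exp\bigl(\tfrac35 n h(\tfrac{\mu}{6k})\bigr)\ =\ \Omega^*\bigl(2^{\frac35 n h_2(\mu/(6k))}\bigr),
\]
which is the claim of Corollary~\ref{cor:coarse}.
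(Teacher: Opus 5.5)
Your proof is correct. Its skeleton is the paper's: the paper also takes $b=\mu/(6k)$, so that $3kb/\mu=1/2$, and $p=1+1/\log(1/b)$ (your $\delta=c/\log(1/b)$ with $c=1$). It then absorbs the prefactor $2^{-p/(p-1)}=\tfrac12 b^{\log 2}$ as an inverse-polynomial factor.

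The difference is the key inequality $H_p(b)\ge\tfrac35 h(b)$. Write $L=\log(1/b)$. The paper uses the exact identity $b^{p}=b/\mathrm{e}$ together with the concavity bound $(1-b)^{1/L}\le 1-b/L$. This gives $H_p(b)\ge b\bigl((1-\mathrm{e}^{-1})L+1-b\bigr)\ge(1-\mathrm{e}^{-1})(L+1)b\ge(1-\mathrm{e}^{-1})h(b)$, where the last step uses $h(b)\le b(L+1)$.

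You instead bound $\mathbb E[\mathrm{e}^{-\delta L}]$ to second order. That loses a factor of about $1-c/2$ rather than $1-\mathrm{e}^{-1}$. At $c=1$ your bound gives only about $\tfrac12 h(b)$, so you must take $c$ noticeably below $1$. With your $c=1/2$ the prefactor becomes $\tfrac12 b^{2\log 2}$, which is still inverse-polynomial.

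What each route buys:
\begin{itemize}
\item The paper's exact evaluation gives the better constant at $c=1$ with a one-line computation.
\item Your moment-expansion argument is more generic. It makes transparent that letting $c\to0$ pushes the constant $\tfrac35$ toward $1$, at the price of a larger polynomial prefactor $\tfrac12 b^{\log 2/c}$.
\end{itemize}

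One correction: the fallback you propose for the endpoint would fail. You have $H_2(1/12)=\log(144/122)\approx 0.166$, while $\tfrac35 h(1/12)\approx 0.172$, not $0.14$. Capping $p$ at $2$ is also vacuous there, since $p=1+c/\log(1/b)\le 1.21$ for $b\le 1/12$.

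You do not need the fallback. Use $(1-b)\log^2(1-b)\le b^2/(1-b)$ and $h(b)\ge b\log(1/b)$. Then your loss term satisfies
$\tfrac{\delta}{2}\mathbb E[L^2]\le \tfrac c2\bigl(1+\tfrac{b}{(1-b)\log^2(1/b)}\bigr)h(b)\le 0.26\,h(b)$
for $c=1/2$ and every $b\le 1/12$. So your main estimate already covers the whole range uniformly, and you should simply drop the endpoint discussion.
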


\begin{proof}
	Set
	\[
	b=\frac{\mu}{6k},
	\qquad
	L=\log(1/b),
	\qquad
	p=1+\frac1L.
	\]
	Since $\mu\le1$ and we are interested in the regime $k\ge2$,
	\[
	0<b\le\frac1{12}<\frac12,
	\qquad
	L\ge\log12>1.
	\]
	Moreover,
	\[
	\frac{3kb}{\mu}=\frac12.
	\]
	Since $p/(p-1)=L+1$, Theorem~\ref{thm:exact} gives
	\begin{equation}\label{eq:coarse-intermediate}
		N_H(\lambda_0+\mu M)
		\ge
		2^{-(L+1)}\exp\bigl(nH_p(b)\bigr).
	\end{equation}
	
	It remains to lower-bound $H_p(b)$. The identities
	$b=\e^{-L}$ and $p=1+1/L$ imply
	\[
	b^p=\frac b\e.
	\]
	Since $L>1$, the concavity of $x\mapsto x^{1/L}$ gives
	\[
	(1-b)^{1/L}\le1-\frac bL.
	\]
	Consequently,
	$$
	(1-b)^p+b^p
	\le
	(1-b)\left(1-\frac bL\right)+\frac b\e
	=
	1-b\left(1-\e^{-1}+\frac{1-b}{L}\right).
	$$
	Using $-\log(1-x)\ge x$, we obtain
	\[
	H_p(b)
	\ge
	b\bigl((1-\e^{-1})L+1-b\bigr).
	\]
	Since $b\le1/12<\e^{-1}$,
	\[
	(1-\e^{-1})L+1-b
	\ge
	(1-\e^{-1})(L+1).
	\]
	On the other hand,
	\[
	h(b)
	=
	-b\log(b)-(1-b)\log(1-b)
	=
	bL-(1-b)\log(1-b)
	\le
	b(L+1).
	\]
	It follows that
	\[
	H_p(b)
	\ge
	(1-\e^{-1})h(b)
	\ge
	\frac35h(b).
	\]
	Substituting this into \eqref{eq:coarse-intermediate} and using
	$h_2(b)=h(b)/\log 2$ yields
	\[
	N_H(\lambda_0+\mu M)
	\ge
	2^{\frac35nh_2(b)-(L+1)}.
	\]
	Finally,
	\[
	2^{-(L+1)}
	=
	\frac12\left(\frac{\mu}{6k}\right)^{\log 2}.
	\]
	Since $k\le n$ and $\mu=\Omega(1/\operatorname{poly}(n))$, this is an inverse-polynomial factor in $n$ and is absorbed by the $\Omega^*$ notation.
\end{proof}

\begin{remark}
	Corollary~\ref{cor:coarse} is a simpler but coarser consequence of the Rényi-parameterized bound in Theorem~\ref{thm:exact}. It is obtained by choosing particular values of the depolarizing parameter \(b\) and the Rényi parameter \(p\), together with several simplifying estimates. Optimizing these parameters or sharpening the estimates may improve the constants \(3/5\) and \(6\) appearing in the exponent of Corollary~\ref{cor:coarse}.
\end{remark}

Although the constants in Corollary~\ref{cor:coarse} may be improved, its binary-entropy dependence is optimal up to constant factors in the exponent for fixed \(k\).

\begin{proposition}
	\label{prop:tightness}
	Fix \(k\geq2\) and \(0<\mu<1/2\). There exists a family of classical diagonal \(n\)-qubit \(k\)-local Hamiltonians
	\(H_{\mathrm{cl}}^{(n)}\) such that
	\[
	N_{H_{\mathrm{cl}}^{(n)}}\left(
	\lambda_0\bigl(H_{\mathrm{cl}}^{(n)}\bigr)+\mu M
	\right)
	=
	2^{\Theta\left(
	nh_2\left(\frac{\mu}{6k}\right)
	\right)}.
	\]
	Consequently, the exponent in Corollary~\ref{cor:coarse} is
	optimal up to constant factors.
\end{proposition}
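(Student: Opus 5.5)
I would take the simplest possible classical instance, a field term on every qubit, and count its low-energy states exactly. Concretely, set
\[
H_{\mathrm{cl}}^{(n)}=-\sum_{i=1}^{n}Z_i .
\]
This is $1$-local, hence $k$-local for every $k\ge2$. It is diagonal in the computational basis, and $M=n$. The basis state $\ket{x}$ has energy $-n+2|x|$, where $|x|$ is the Hamming weight. So $\lambda_0=-n$, and
\[
N_{H_{\mathrm{cl}}^{(n)}}(\lambda_0+\mu M)=\#\{x:\ |x|\le \mu n/2\}=\sum_{w\le \mu n/2}\binom{n}{w}.
\]
The whole problem therefore reduces to estimating the size of a Hamming ball of radius $\alpha n$, where $\alpha=\mu/2<1/4$.

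For the upper bound I will use the standard estimate $\sum_{w\le\alpha n}\binom nw\le 2^{nh_2(\alpha)}$, valid for $\alpha\le1/2$. It follows from $1\ge\sum_{w\le\alpha n}\binom nw\alpha^w(1-\alpha)^{n-w}$ together with $\alpha^w(1-\alpha)^{n-w}\ge 2^{-nh_2(\alpha)}$ when $w\le\alpha n$. For the lower bound I can either invoke Corollary~\ref{cor:coarse} directly, which gives $2^{\frac35 nh_2(\mu/6k)}$ up to polynomial factors, or use $\binom{n}{\lfloor\alpha n\rfloor}\ge 2^{nh_2(\alpha)-O(\log n)}$. Both exponents are linear in $n$ for fixed $\mu$ and $k$, so the polynomial factors are absorbed into the $\Theta(\cdot)$ in the exponent once $n$ is large.

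It remains to show $h_2(\mu/2)=\Theta\bigl(h_2(\mu/6k)\bigr)$ with constants depending only on $k$. The lower inequality $h_2(\mu/2)\ge h_2(\mu/6k)$ holds because $h_2$ is increasing on $[0,1/2]$. For the upper inequality, $h_2$ is concave with $h_2(0)=0$, so $h_2(ax)\le a\,h_2(x)$ whenever $a\ge1$ and $ax\le1/2$. Taking $a=3k$ and $x=\mu/6k$ gives $h_2(\mu/2)\le 3k\,h_2(\mu/6k)$. Hence
\[
N=2^{\Theta(nh_2(\mu/6k))},
\]
which matches the lower bound of Corollary~\ref{cor:coarse} up to a constant factor in the exponent, as claimed.

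The step I expect to need most care is the "consequently" clause. For fixed $k$ the optimality is up to constant factors that may depend on $k$ (here a factor of about $3k$), not uniformly in $k$, and I would state this explicitly. If a genuinely $k$-body example is preferred, I would replace each $Z_i$ by block terms on disjoint $k$-qubit groups. However, the $1$-local example already meets the statement as written.
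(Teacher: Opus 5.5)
Your argument is correct, but your witness differs from the paper's.

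The paper uses the genuinely $k$-body Hamiltonian $\sum_{|S|=k}\bigl(I_S-\ket{0^k}\!\bra{0^k}_S\bigr)$ with $M=\binom nk$. There a basis state of weight $w$ has energy $\binom nk-\binom{n-w}k$. The admissible weights then extend to $q_{k,\mu}n+o(n)$ with $q_{k,\mu}=1-(1-\mu)^{1/k}\in[\mu/k,\,2\mu/k]$. This gives $h_2(\mu/6k)\le h_2(q_{k,\mu})\le 12\,h_2(\mu/6k)$ with absolute constants.

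Your $1$-local field $-\sum_i Z_i$ reduces the count to an exact Hamming ball of radius $\mu n/2$. This is more elementary: there are no asymptotics of $\binom{n-w}k/\binom nk$, and your upper bound $2^{nh_2(\mu/2)}$ holds for every $n$. However, the critical radius does not shrink with $k$, so you match Corollary~\ref{cor:coarse} only up to a $k$-dependent factor (your $3k$). That suffices for the proposition as stated, since $k$ is fixed.

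Note, though, that with $k$ and $\mu$ both fixed, the literal $\Theta$ claim holds for any family: $N\le2^n$ trivially, and Corollary~\ref{cor:coarse} gives $2^{\Omega(n)}$. So the real content is how uniform the constants are. Yours are uniform in $\mu$. The paper's are uniform in both $\mu$ and $k$, which additionally shows that the $1/k$ inside $h_2(\mu/6k)$ cannot be dropped.

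The disjoint-block variant you mention would not recover this. With $I-\ket{0^k}\!\bra{0^k}$ on each of the $n/k$ groups, further flips inside an already excited group cost nothing. Then $\log_2 N$ becomes roughly $n\mu+(n/k)h_2(\mu)$, again too large by a factor growing with $k$. In the paper's all-$k$-subsets construction, flipped bits essentially do not share cost. Each one excites a constant fraction of the $\binom{n-1}{k-1}=\frac knM$ terms containing it, and this is what forces the critical weight down to $\Theta(\mu/k)$.
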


\begin{proof}
	Consider the classical diagonal Hamiltonian
	\[
	H_{\mathrm{cl}}^{(n)}
	=
	\sum_{\substack{S\subseteq[n]\\ |S|=k}}
	\left(
	I_S-\ket{0^k}\!\bra{0^k}_S
	\right).
	\]
	Each interaction term acts nontrivially on exactly \(k\) qubits and has norm \(1\). Therefore,
	\[
	M=\binom{n}{k}.
	\]
	The unique ground state is \(\ket{0^n}\), with ground-state energy \(\lambda_0=0\).

	Let \(\ket{x}\) be a computational-basis state of Hamming weight \(w\). An interaction term contributes energy \(1\) precisely when its support contains at least one nonzero bit of \(x\). Hence,
	\[
	\bra{x}H_{\mathrm{cl}}^{(n)}\ket{x}
	=
	\binom{n}{k}-\binom{n-w}{k}.
	\]
	Define
	\[
	q_{k,\mu}
	\coloneqq
	1-(1-\mu)^{1/k}.
	\]
	Since \(k\) and \(\mu\) are fixed, the largest Hamming weight of an eigenstate having energy at most \(\mu M\) is
	\(q_{k,\mu}n+o(n)\). It follows from Stirling's approximation that
	\[
	N_{H_{\mathrm{cl}}^{(n)}}(\mu M)
	=
	2^{n h_2(q_{k,\mu})+o(n)}.
	\]

	Moreover,
	\[
	\frac{\mu}{k}
	\leq
	q_{k,\mu}
	\leq
	\frac{2\mu}{k}.
	\]
	Because \(h_2\) is increasing on \([0,1/2]\) and \(x\mapsto h_2(x)/x\) is decreasing, this implies
	\[
	h_2(q_{k,\mu})
	=
	\Theta\left(
	h_2\left(\frac{\mu}{6k}\right)
	\right).
	\]
	Therefore,
	\[
	N_{H_{\mathrm{cl}}^{(n)}}(\mu M)
	=
	2^{\Theta\left(
	nh_2\left(\frac{\mu}{6k}\right)
	\right)},
	\]
	which proves the claim.
\end{proof}

We now take into account the type of interactions in $H$. Substituting Lemma~\ref{cor:response-energy} for Lemma~\ref{lem:energy} in the proofs of Theorem~\ref{thm:exact} and Corollary~\ref{cor:coarse} yields the following:

\begin{corollary}
\label{cor:response-density}
Let \(H\) be an \(n\)-qubit \(k\)-local Hamiltonian with \(M>0\), and write $y=y(H)$ as defined in Eq.~\eqref{eq:interaction-response-parameter}.
\begin{enumerate}
\item[(i)] Let $\mu>0$ and choose $b\in(0,1/2)$ such that $yb<\mu$. Then for every $p>1$,
\[
 N_H(\lambda_0+\mu M)
 \ \ge\ \left(1-\frac{yb}{\mu}\right)^{p/(p-1)}\exp\!\bigl(nH_p(b)\bigr).
\]
\item[(ii)] Suppose in addition that $y\ge2$. Then for every $0<\mu\le1$ satisfying $\mu=\Omega(1/\operatorname{poly}(n))$,
\[
 N_H(\lambda_0+\mu M)
 \ =\ \Omega^{*}\!\left(2^{\frac35 n\,h_2\left(\frac{\mu}{2y}\right)}\right).
\]
\end{enumerate}
\end{corollary}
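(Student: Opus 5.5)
The plan is to rerun the proofs of Theorem~\ref{thm:exact} and Corollary~\ref{cor:coarse} verbatim, with the energy input of Lemma~\ref{lem:energy} replaced by Lemma~\ref{cor:response-energy}.

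\emph{Part (i).} Fix a ground-state density operator $\omega$ and set $\rho_b=\D_b^{\otimes n}(\omega)$ with $b\in(0,1/2)\subset[0,2/3]$. Lemma~\ref{cor:response-energy} gives
\[
\Tr\bigl((H-\lambda_0 I)\rho_b\bigr)\le ybM=\frac{yb}{\mu}\,\mu M .
\]
The moment bound of Lemma~\ref{lem:moment} does not involve $H$, so it still gives
\[
\Tr(\rho_b^p)\le \exp\bigl(-(p-1)nH_p(b)\bigr).
\]
Now apply Lemma~\ref{lem:smoothing} with $t=\mu M>0$, $\theta=yb/\mu\in[0,1)$ and $Q=\exp(-(p-1)nH_p(b))$. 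This yields
\[
N_H(\lambda_0+\mu M)\ge \Bigl(1-\tfrac{yb}{\mu}\Bigr)^{p/(p-1)}\exp\bigl(nH_p(b)\bigr),
\]
which is the claimed bound.

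\emph{Part (ii).} Choose
\[
b=\frac{\mu}{2y},\qquad L=\log(1/b),\qquad p=1+\frac1L,
\]
so that $yb/\mu=1/2$ and $p/(p-1)=L+1$. The hypothesis $y\ge 2$ together with $\mu\le1$ gives $b\le 1/4<1/2$. Part (i) therefore gives
\[
N_H(\lambda_0+\mu M)\ge 2^{-(L+1)}\exp\bigl(nH_p(b)\bigr).
\]

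The entropy estimate is the step that needs checking, because Corollary~\ref{cor:coarse} used $b\le 1/12$. That proof only needs $L>1$ and $b\le e^{-1}$.
\begin{itemize}
\item The bound $L>1$ is needed for the concavity step $(1-b)^{1/L}\le 1-b/L$. It holds since $L\ge\log 4>1$.
\item The bound $b<e^{-1}$ is needed for $(1-e^{-1})L+1-b\ge(1-e^{-1})(L+1)$. It holds since $b\le 1/4<e^{-1}$.
\end{itemize}
The remaining steps are unchanged: $b^p=b/e$, the inequality $-\log(1-x)\ge x$, and $h(b)\le b(L+1)$. Together they give
\[
H_p(b)\ge (1-e^{-1})h(b)\ge \tfrac35 h(b).
\]
Hence
\[
N_H(\lambda_0+\mu M)\ge 2^{\frac35 n h_2(\mu/(2y))-(L+1)}.
\]

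Finally, the prefactor satisfies
\[
2^{-(L+1)}=\tfrac12\Bigl(\frac{\mu}{2y}\Bigr)^{\log 2}.
\]
Since $y\le 3k\le 3n$ and $\mu=\Omega(1/\mathrm{poly}(n))$, this is inverse-polynomial in $n$ and is absorbed by $\Omega^*$. This is exactly where the assumption $y\ge2$ is used; for smaller $y$, $b$ could exceed $1/2$.
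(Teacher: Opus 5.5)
Your proposal is correct and follows the paper's proof essentially line for line. Part (i) is Lemma~\ref{lem:smoothing} with $\theta=yb/\mu$, using Lemma~\ref{cor:response-energy} and Lemma~\ref{lem:moment}. Part (ii) takes $b=\mu/(2y)$ and $p=1+1/L$, notes that the entropy estimate of Corollary~\ref{cor:coarse} only needs $L>1$ and $b<\mathrm{e}^{-1}$ (both follow from $b\le 1/4$), and absorbs the prefactor $\tfrac12 b^{\log 2}$ using $y\le 3n$.

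One quibble concerns your closing sentence. $y\ge 2$ is needed not merely to keep $b<1/2$, but to secure $b\le\mathrm{e}^{-1}$ and $L>1$ in the entropy step. Those can fail for smaller $y$ even when $b<1/2$.
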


\begin{proof}
Part~(i) is the proof of Theorem~\ref{thm:exact} verbatim, with Lemma~\ref{cor:response-energy} in place of Lemma~\ref{lem:energy}: one applies Lemma~\ref{lem:smoothing} with $t=\mu M$, $\theta=yb/\mu$ and $Q=\exp(-(p-1)nH_p(b))$.

For part~(ii), set
\[
 b=\frac{\mu}{2y},\qquad L=\log(1/b),\qquad p=1+\frac1L,
\]
so that $yb/\mu=1/2$ and $p/(p-1)=L+1$. Since $\mu\le1$ and $y\ge2$ we have $0<b\le1/4<\mathrm e^{-1}$, and hence $L>1$ and $b<1/2$. These are the only properties of $b$ used in the proof of Corollary~\ref{cor:coarse}, which therefore gives $H_p(b)\ge(1-\mathrm e^{-1})h(b)\ge\tfrac35h(b)$. Substituting into part~(i) and using $h_2(b)=h(b)/\log2$,
\[
 N_H(\lambda_0+\mu M)
 \ \ge\ 2^{-(L+1)}\exp\!\bigl(nH_p(b)\bigr)
 \ \ge\ \frac12\,b^{\log2}\;2^{\frac35 n\,h_2\left(\frac{\mu}{2y}\right)} .
\]
Finally, since $y\le3n$, we have
\[
 b=\frac{\mu}{2y}\ge\frac{\mu}{6n}
 =\Omega(1/\operatorname{poly}(n)).
\]
Thus the prefactor $\tfrac12 b^{\log 2}$ is bounded below by an inverse-polynomial factor in $n$ and is absorbed by the $\Omega^*$ notation.
\end{proof}

%% file: algorithm.tex
\section{Algorithmic Implications}\label{sec:algorithmic-implications}

We now combine our density-of-states bound in Corollary~\ref{cor:coarse} with the quantum algorithm of~\cite[Section~3]{MGT26}, which is based on the low-energy state-preparation procedure of~\cite[Section~6]{lintong}. 

Let
\[
\widetilde H=H\otimes I_{\mathrm{anc}}
\]
be the extended Hamiltonian acting as \(H\) on an \(n\)-qubit system register and as the identity on an \(n\)-qubit ancilla register. We choose the maximally entangled state
\[
\ket{\Phi}
=
\frac{1}{\sqrt{2^n}}
\sum_{z\in\{0,1\}^n}
\ket{z}_{\mathrm{sys}}\ket{z}_{\mathrm{anc}}
\]
as the initial state. Recall that $\ket{\Phi}$ can be prepared efficiently.

\begin{lemma}\label{lem:maximally-entangled-overlap}
	For every \(E\in\mathbb R\), the overlap of \(\ket{\Phi}\) with the
	low-energy subspace of \(\widetilde H\) corresponding to \(P_{\le E}\) is
	\[
	\bra{\Phi}
	\bigl(P_{\leq E}\otimes I_{\mathrm{anc}}\bigr)
	\ket{\Phi}
	=
	\frac{N_H(E)}{2^n}.
	\]
\end{lemma}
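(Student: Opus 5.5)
The plan is a direct computation that uses the transpose trick for the maximally entangled state. Write $\ket{\Phi}=2^{-n/2}\sum_{z}\ket{z}\ket{z}$. Then
\[
\bra{\Phi}\bigl(P_{\le E}\otimes I_{\mathrm{anc}}\bigr)\ket{\Phi}
=\frac{1}{2^n}\sum_{z,z'}\bra{z'}P_{\le E}\ket{z}\,\langle z'|z\rangle
=\frac{1}{2^n}\sum_{z}\bra{z}P_{\le E}\ket{z}.
\]
The last sum is $\frac{1}{2^n}\Tr(P_{\le E})$. Equivalently, the reduced state of $\ket{\Phi}$ on the system register is $I/2^n$, so the expectation of any operator of the form $A\otimes I_{\mathrm{anc}}$ equals $\Tr(A)/2^n$.

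The second step uses the definition from Section~\ref{sec:preliminaries}: $\Tr(P_{\le E})=\operatorname{rank}(P_{\le E})=N_H(E)$, which yields the claim.

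Finally, I will note why $P_{\le E}\otimes I_{\mathrm{anc}}$ is the spectral projector of $\widetilde H=H\otimes I_{\mathrm{anc}}$ onto eigenvalues at most $E$. Since $\widetilde H$ has eigenvectors $\ket{\psi_j}\otimes\ket{a}$ with eigenvalue $\lambda_j$, one has $\1[\widetilde H\le E]=\1[H\le E]\otimes I_{\mathrm{anc}}$, so the phrase ``low-energy subspace of $\widetilde H$'' is consistent with the statement.

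There is no real obstacle here. The only point requiring care is keeping the index bookkeeping in the double sum straight, specifically that the ancilla inner product collapses $z'=z$.
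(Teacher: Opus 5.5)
Your proof is correct and follows the paper's argument: the paper observes that the reduced state of $\ket{\Phi}$ on the system register is $I/2^n$, so the overlap equals $\Tr(P_{\le E})/2^n = N_H(E)/2^n$, and your double-sum computation just spells this out. Your remark that $\1[\widetilde H\le E]=P_{\le E}\otimes I_{\mathrm{anc}}$ is a correct check that the paper leaves implicit, and no transpose trick is needed, only orthonormality of the ancilla basis.
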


\begin{proof}
	The reduced density operator of \(\ket{\Phi}\) on the system register is \(I/2^n\). Therefore,
	$$
	\bra{\Phi}
	\bigl(P_{\leq E}\otimes I_{\mathrm{anc}}\bigr)
	\ket{\Phi}
	=
	\operatorname{Tr}\left(
	P_{\leq E}\frac{I}{2^n}
	\right)
	=
	\frac{\operatorname{Tr}(P_{\leq E})}{2^n}
	=
	\frac{N_H(E)}{2^n}.
	$$
\end{proof}

We can now prove Theorem~\ref{th:main}.

\begin{theorem*}[Restatement of Theorem~\ref{th:main}]
	Let \(H\) be an \(n\)-qubit \(k\)-local Hamiltonian. For any
	\(\varepsilon=\Omega(1/\operatorname{poly}(n))\), there exists a quantum algorithm that, with high probability, prepares an \(n\)-qubit quantum state \(\rho\) satisfying
	\[
	\operatorname{Tr}[H\rho]
	\leq
	\lambda_0+\varepsilon M
	\]
	and outputs a corresponding energy estimate \(\widehat{E}\) lying within the window
	\[
	\widehat{E}
	\in
	[\lambda_0,\lambda_0+\varepsilon M].
	\]
	The running time of the algorithm is
	\[
	O^*\left(
	2^{\frac{n}{2}\left(
		1-\frac{3}{5}
		h_2\left(\frac{\varepsilon}{6k}\right)
		\right)}
	\right),
	\]
	where $h_2(\cdot)$ denotes the binary entropy function in bits.
\end{theorem*}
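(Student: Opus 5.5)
The plan is to combine the density-of-states bound of Corollary~\ref{cor:coarse} with the state-preparation framework of~\cite[Section~3]{MGT26}, which runs quantum phase estimation on $\widetilde H$ from the initial state $\ket{\Phi}$ and boosts the result with amplitude amplification.

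\textbf{Energy window.} Fix $\varepsilon=\Omega(1/\operatorname{poly}(n))$ and split it into two halves. Set $\mu=\varepsilon/2$, or a constant fraction of $\varepsilon$. The first half of the window is used for the low-energy subspace, namely the range of $P_{\le \lambda_0+\mu M}$. The other half absorbs the phase-estimation precision error, which is $O(\varepsilon M)$ and is reached with $\operatorname{poly}(n)$ Hamiltonian simulation cost, since $\varepsilon M/\lVert H\rVert\ge \varepsilon$ is inverse-polynomial.

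\textbf{Overlap and cost of one search.} By Lemma~\ref{lem:maximally-entangled-overlap} and Corollary~\ref{cor:coarse}, the success amplitude of one round of phase estimation on $\ket{\Phi}$, followed by a check that the measured energy is at most the current threshold, is at least
\[
\sqrt{N_H(\lambda_0+\mu M)/2^n}=\Omega^*\bigl(2^{-\frac n2(1-\frac35h_2(\mu/(6k)))}\bigr).
\]
Amplitude amplification therefore needs $O^*(2^{\frac n2(1-\frac35 h_2(\mu/6k))})$ repetitions.

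\textbf{Unknown $\lambda_0$.} Since $\lambda_0$ is unknown, follow~\cite{MGT26,lintong}. Binary-search a threshold $E$ over $[-M,M]$ in $O(\log(1/\varepsilon))$ steps. At each step, run fixed-point or exponential-search amplitude amplification with the iteration budget above, and decide whether some eigenvalue at most $E$ is detected. If $E\ge\lambda_0+\mu M$, amplification succeeds with high probability; if $E<\lambda_0$, it never does. This yields $\widehat E$ in $[\lambda_0,\lambda_0+\varepsilon M]$, after taking care of the sign of the phase-estimation error so that $\widehat E\ge\lambda_0$, for example by shifting the threshold upward by the precision. The post-selected state output by the final successful run has energy at most $\lambda_0+\varepsilon M$ up to negligible leakage from phase-estimation tails, which is controlled by repeating the estimation and taking the median.

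\textbf{Adjusting constants.} The main obstacle is matching the claimed exponent $h_2(\varepsilon/6k)$ rather than $h_2(\varepsilon/12k)$. For this I would allocate almost all of the window to the subspace bound: take $\mu=\varepsilon(1-1/\operatorname{poly}(n))$ and let phase estimation resolve $\varepsilon M/\operatorname{poly}(n)$, which still costs only polynomial overhead. Continuity of $h_2$ then gives $n h_2(\mu/6k)\ge nh_2(\varepsilon/6k)-O(1)$, and the constant change is absorbed into $O^*$. This gives the stated running time.
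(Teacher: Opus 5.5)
Your proposal is correct and essentially matches the paper's proof. The paper sets $\mu=(1-1/n)\varepsilon$, combines Lemma~\ref{lem:maximally-entangled-overlap} with Corollary~\ref{cor:coarse} to get $\gamma=N_H(\lambda_0+\mu M)/2^n=\Omega^*(2^{-n+\frac35 n h_2(\mu/6k)})$, and defers the phase-estimation, amplitude-amplification and binary-search details to Theorem~2 of \cite{MGT26}, as in your final ``adjusting constants'' choice. Your continuity step $nh_2(\mu/6k)\ge nh_2(\varepsilon/6k)-O(1)$, which the paper leaves implicit, holds as long as the slack $1-\mu/\varepsilon$ is at least of order $1/n$ (as with $\mu=(1-1/n)\varepsilon$), since the mean value theorem then bounds the loss by $\frac{\varepsilon}{6k}\log_2\frac{12k}{\varepsilon}=O(1)$.
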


\begin{proof}
	If \(M=0\), the statement is immediate. Otherwise, set
	\[
	\mu=\left(1-\frac1n\right)\varepsilon.
	\]
	By Lemma~\ref{lem:maximally-entangled-overlap} and Corollary~\ref{cor:coarse}, the overlap \(\gamma\) of \(\ket{\Phi}\) with the subspace of \(\widetilde H\) having energies at most \(\lambda_0+\mu M\) satisfies
	\begin{equation}\label{eq:gamma}
		\gamma
		=
		\frac{N_H(\lambda_0+\mu M)}{2^n}
		=
		\Omega^*\left(
		2^{-n+\frac35nh_2\left(\frac{\mu}{6k}\right)}
		\right).
	\end{equation}	
	The remainder of the proof follows by the same argument as the proof of Theorem~2 in~\cite{MGT26}, with \(E_d\) replaced by \(\lambda_0\) and the overlap bound used there replaced by Equation~\eqref{eq:gamma}. We therefore omit the repeated details.
\end{proof}

\subsection{Runtime Bounds for Restricted Interaction Types}

We now use the parameter $y(H)$ to obtain runtime bounds for specific interaction models on arbitrary graphs. We first state the runtime in terms of $y(H)$ and then evaluate this parameter for the models considered below.

\begin{corollary}
\label{cor:response-algorithms}
Let $H$ be an $n$-qubit $k$-local Hamiltonian with $M>0$
and $y=y(H)\ge2$. For any $0<\varepsilon<1$ satisfying $\varepsilon=\Omega(1/\operatorname{poly}(n))$, there exists a quantum algorithm that, with high probability, prepares an $n$-qubit state $\rho$ satisfying
\[
 \operatorname{Tr}(H\rho)\le\lambda_0+\varepsilon M
\]
and outputs an estimate $\widehat E\in[\lambda_0,\lambda_0+\varepsilon M]$,
in time
\[
 O^*\!\left(    
 2^{\frac n2\left(1-\frac35h_2\left(\frac{\varepsilon}{2y}\right)\right)}
 \right).
\]
\end{corollary}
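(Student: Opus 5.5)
The proof mirrors the restatement of Theorem~\ref{th:main}. The only change is that Corollary~\ref{cor:coarse} is replaced by part~(ii) of Corollary~\ref{cor:response-density}.

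If $M=0$ there is nothing to prove. Otherwise set $\mu=(1-1/n)\varepsilon$. This satisfies $0<\mu\le1$ and $\mu=\Omega(1/\operatorname{poly}(n))$, so Corollary~\ref{cor:response-density}(ii) applies under the hypothesis $y\ge2$ and gives
\[
N_H(\lambda_0+\mu M)=\Omega^*\!\left(2^{\frac35 n\,h_2\left(\frac{\mu}{2y}\right)}\right).
\]
Lemma~\ref{lem:maximally-entangled-overlap} then shows that the overlap of $\ket{\Phi}$ with the spectral subspace of $\widetilde H$ at energies at most $\lambda_0+\mu M$ is
\[
\gamma=\frac{N_H(\lambda_0+\mu M)}{2^n}=\Omega^*\!\left(2^{-n+\frac35 n h_2\left(\frac{\mu}{2y}\right)}\right).
\]

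Next we feed this overlap into the algorithmic framework of \cite[Section~3]{MGT26}, which uses the state-preparation routine of \cite{lintong}, exactly as in the proof of Theorem~\ref{th:main}. That framework runs phase estimation with precision of order $(\varepsilon-\mu)M=\varepsilon M/n$, which is polynomial overhead, together with amplitude amplification using $O(\gamma^{-1/2})$ rounds. A binary search or minimum-finding wrapper yields $\widehat E\in[\lambda_0,\lambda_0+\varepsilon M]$ and a state with $\operatorname{Tr}(H\rho)\le\lambda_0+\varepsilon M$. The total cost is $O^*(\gamma^{-1/2})=O^*\bigl(2^{\frac n2(1-\frac35 h_2(\mu/(2y)))}\bigr)$.

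The main obstacle is replacing $\mu$ by $\varepsilon$ in the exponent. We need $n\,h_2(\varepsilon/(2y))-n\,h_2(\mu/(2y))=O(\log n)$, so that the discrepancy is only a polynomial factor. We bound it by the mean value theorem. The difference of the arguments is $\varepsilon/(2yn)$, and the derivative is $h_2'(x)=\log_2((1-x)/x)$ at some $x\ge\mu/(2y)\ge\mu/(6n)$, which is $O(\log n)$. Since $y\ge2$, the product is $n\cdot\frac{\varepsilon}{2yn}\cdot O(\log n)=O(\log n)$, and $2^{O(\log n)}$ is absorbed into $O^*$. We also check that $\varepsilon/(2y)\le1/4<1/2$, so $h_2$ is monotone on the relevant range and no case split is needed.
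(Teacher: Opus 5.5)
Your proposal is correct and follows the paper's proof. You set $\mu=(1-1/n)\varepsilon$, bound the overlap $\gamma$ via Corollary~\ref{cor:response-density}(ii) and Lemma~\ref{lem:maximally-entangled-overlap}, and rerun the argument of Theorem~\ref{th:main} (the framework of~\cite{MGT26}) with this $\gamma$; your mean-value-theorem check that replacing $h_2(\mu/(2y))$ by $h_2(\varepsilon/(2y))$ in the exponent costs only a $2^{O(\log n)}$ factor is a step the paper leaves to the cited argument, and it is correct.
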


\begin{proof}
Set $\mu=(1-1/n)\varepsilon$. Since $y\ge2$ and $\mu=\Omega(1/\operatorname{poly}(n))$, Lemma~\ref{lem:maximally-entangled-overlap} and Corollary~\ref{cor:response-density}~(ii) give, for the overlap $\gamma$ appearing in the proof of Theorem~\ref{th:main},
$$\gamma
=
		\frac{N_H(\lambda_0+\mu M)}{2^n}
		=
\Omega^*\left(
		2^{-n+\frac35nh_2\left(\frac{\mu}{2y}\right)}
		\right).$$
The proof of Theorem~\ref{th:main} now applies verbatim, with this bound for $\gamma$ in place of the one supplied there by Corollary~\ref{cor:coarse}.
\end{proof}

We apply Corollary~\ref{cor:response-algorithms} to the following models. In each model below, $H$ is a sum of local terms acting on one or two qubits. The two-qubit terms are indexed by unordered pairs $\{i,j\}$; the pairs carrying a nonzero coupling form the interaction graph of $H$, about which we make no assumption whatsoever. All coupling coefficients are real, and we write $\sum_{i<j}$ for the sum over interacting pairs. We assume $M>0$ throughout.

\paragraph{Heisenberg antiferromagnet.}
The {Heisenberg antiferromagnet} has Hamiltonian
\[
 H_{\mathrm{Heis}}
 =\sum_{i<j}J_{ij}(X_iX_j+Y_iY_j+Z_iZ_j),
 \qquad J_{ij}\ge0.
\]
Each local term satisfies $\|h_{ij}\|=3J_{ij}$. Further,
$R_i(h_{ij})=R_j(h_{ij})=0$ due to Eq.~\eqref{eq:Rsplit}. Hence
\[
 M=3\sum_{i<j}J_{ij},\qquad
 y(H_{\mathrm{Heis}})
 =\frac{4}{M}\sum_{i<j}\|h_{ij}\|=4.
\]

\paragraph{XY model.}
The $XY$ model has Hamiltonian
\[
 H=\sum_{i<j}J_{ij}(X_iX_j+Y_iY_j).
\]
Each edge term satisfies $\|h_{ij}\|=2|J_{ij}|$ and $R_i(h_{ij})=R_j(h_{ij})=0$. Hence
\[
 M=2\sum_{i<j}|J_{ij}|,\qquad
 y(H)=\frac{4}{M}\sum_{i<j}\|h_{ij}\|=4.
\]

\paragraph{Transverse-field Ising model.}
The transverse-field Ising model has Hamiltonian
\[
 H=\sum_{i<j}J_{ij}Z_iZ_j+\sum_i g_iX_i.
\]
Write $h_{ij}=J_{ij}Z_iZ_j$ and $h_i=g_iX_i$, and set $W=\sum_{i<j}|J_{ij}|$ and $B=\sum_i|g_i|$. The local norms are $\|h_{ij}\|=|J_{ij}|$ and $\|h_i\|=|g_i|$, while $R_i(h_{ij})=R_j(h_{ij})=R_i(h_i)=0$. Hence
\[
 M=W+B,\qquad
 y(H)=\frac{2(2W+B)}{W+B}
     =\frac{4W+2B}{W+B}\in[2,4].
\]

\paragraph{Classical Ising model.}
The classical Ising model has Hamiltonian
\[
 H=\sum_{i<j}J_{ij}Z_iZ_j.
\]
Each edge term satisfies $\|h_{ij}\|=|J_{ij}|$ and $R_i(h_{ij})=R_j(h_{ij})=0$. Hence
\[
 M=\sum_{i<j}|J_{ij}|,\qquad
 y(H)=\frac{4}{M}\sum_{i<j}\|h_{ij}\|=4.
\]
This is the case $B=0$ of the transverse-field Ising model. Since $H$ is diagonal in the computational basis, a ground state can be chosen to be a computational basis state. For classical Hamiltonians such as this one the depolarizing argument is not
optimal. Since the ground state is a computational basis state, the low-energy states can instead be counted directly, by perturbing an optimal assignment; this gives a stronger bound than
Corollary~\ref{cor:response-algorithms}. It is the approach taken for \textsc{max-$k$-sat} in~\cite{classicalMaxKSAT}.

\paragraph{Comparison of running times.}
For the models above, write the running-time bound as
\[
 O^*\!\left(2^{\frac n2(1-\frac35h_2(a))}\right),
 \qquad a:=\frac{\varepsilon}{2y},
\]
and call $a$ the \emph{entropy argument}. Since $0<a<1/4$ and $h_2$ is increasing on this interval, a larger entropy argument gives a smaller runtime exponent. Table~\ref{tab:response} compares these arguments with the general bound in Theorem~\ref{th:main}.

\begin{table}[htbp]
\centering
\renewcommand{\arraystretch}{1.25}
\begin{tabular}{@{}llcc@{}}
\hline
Model & $M$ & $y(H)$ & Entropy argument \\
\hline
General $k$-local
 & $\sum_\alpha\|h_\alpha\|$
 & $\le3k$
 & $\varepsilon/(6k)$ \\
Heisenberg antiferromagnet
 & $3\sum_{i<j}J_{ij}$
 & $4$
 & $\varepsilon/8$ \\
$XY$ model
 & $2\sum_{i<j}|J_{ij}|$
 & $4$
 & $\varepsilon/8$ \\
Transverse-field Ising
 & $W+B$
 & $\dfrac{4W+2B}{W+B}$
 & $\dfrac{\varepsilon(W+B)}{8W+4B}$ \\
Classical Ising
 & $\sum_{i<j}|J_{ij}|$
 & $4$
 & $\varepsilon/8$ \\
\hline
\end{tabular}
\caption{Comparison of entropy arguments for arbitrary interaction graphs. The first row gives Theorem~\ref{th:main}, with argument $\varepsilon/12$ when $k=2$; the remaining rows follow from Corollary~\ref{cor:response-algorithms}. The runtime bounds use error $\varepsilon M$, with $M$ as displayed.}
\label{tab:response}
\end{table}

%% file: conclusion.tex
\FloatBarrier
\section{Conclusion}\label{sec:conclusion}
We have shown that every \(k\)-local Hamiltonian possesses a large low-energy subspace close to its ground-state energy governed by the binary-entropy function. More precisely, our density-of-states bound in Corollary~\ref{cor:coarse} gives
\[
N_H(\lambda_0+\mu M)
=
\Omega^*\left(
2^{\frac35nh_2\left(\frac{\mu}{6k}\right)}
\right).
\]
Moreover, Proposition~\ref{prop:tightness} shows that, for fixed \(k\), this bound is optimal up to constant factors in the exponent, as witnessed by a classical diagonal Hamiltonian. Thus, the binary-entropy dependence captures the correct worst-case scaling of the low-energy density of states.

Combining this bound with the algorithmic framework of~\cite{MGT26} gives quantum algorithms for low-energy state preparation and estimation with running time
\[
O^*\left(
2^{\frac n2\left(
	1-\frac35h_2\left(\frac{\varepsilon}{6k}\right)
	\right)}
\right).
\]
The algorithms directly improve upon the runtimes established in~\cite{beat}. The parameter $y(H)$ further tightens these bounds for the models summarized in Table~\ref{tab:response}.

Our result provides a worst-case guarantee for arbitrary $k$-local Hamiltonians and does not require geometric locality, a variational ansatz, or structural assumptions on the ground state. Nevertheless, several limitations should be emphasized. The running time remains exponential, and the output low-energy state $\rho$ does not guarantee a high fidelity with the ground state.\footnote{Low expected energy implies large ground-space overlap only when the spectral gap is sufficiently large. Therefore, our results imply large ground-space overlap when \(\varepsilon M\ll\Delta\), where \(\Delta\) denotes the spectral gap.} Moreover, the accuracy is measured relative to $M$. For dense interaction graphs, \(M\) may grow with the number of qubits $n$. 

Buhrman et al.~\cite{beat} observed that GAP-ETH~\cite{GAPETH1,GAPETH2} asserts the existence of constants \(c,\varepsilon>0\) such that approximating the ground-state energy of a \(3\)-local Hamiltonian within error \(\pm\varepsilon M\) cannot be done classically in time \(O(2^{cn})\). They accordingly suggested a \emph{quantum} GAP-ETH under which no quantum algorithm could solve this problem in time \(O^*(2^{cn/2})\). In this sense, the runtime in Theorem~\ref{th:main} updates the quantitative benchmark provided by Theorem~1 of~\cite{beat} for the proposed quantum GAP-ETH. Moreover, our running time remains exponential, so the result is consistent with quantum GAP-ETH. Establishing matching conditional lower bounds remains an interesting open problem.

In summary, our results show that low-energy state preparation and estimation for general $k$-local Hamiltonians are even more accessible than previous results suggested.

%% file: appendix.tex
\section{Properties of the Qubit Depolarizing Channel}\label{ap: properties}

\begin{lemma}[Proof of Eq.~\eqref{eq:pauli-representation}]\label{alem:qubit_pauli}
	Let $X,Y,Z$ denote the Pauli operators. Then $\D_b$ admits the random-unitary representation
	\begin{equation}\label{aeq:pauli-representation}
		\D_b(\rho)
		=
		\left(1-\frac{3b}{2}\right)\rho
		+\frac b2\left(X\rho X+Y\rho Y+Z\rho Z\right).
	\end{equation}
\end{lemma}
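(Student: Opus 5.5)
We verify the identity by computing the output of the Pauli twirl on a general single-qubit operator and comparing it term by term with the definition $\D_b(\rho)=(1-2b)\rho+bI_2$ from Eq.~\eqref{eq:depol-definition}. The central ingredient is the standard twirl identity
\[
\rho+X\rho X+Y\rho Y+Z\rho Z=2\Tr(\rho)\,I_2,
\]
valid for every $2\times 2$ matrix $\rho$. Everything else is bookkeeping.

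To prove the twirl identity, we expand $\rho$ in the Pauli basis. Write
\[
\rho=\tfrac12\bigl(c_I I+c_X X+c_Y Y+c_Z Z\bigr),
\qquad c_I=\Tr(\rho).
\]
For $P,Q\in\{X,Y,Z\}$, the Pauli matrices either commute or anticommute. This gives $PQP=Q$ if $P=Q$ and $PQP=-Q$ if $P\neq Q$, while $PIP=I$ for every $P$. We now sum $Q$ conjugated by each of $I,X,Y,Z$:
\begin{itemize}
\item for $Q=I$, each conjugation returns $I$, so the sum is $4I$;
\item for $Q\in\{X,Y,Z\}$, conjugation by $I$ and by $Q$ itself each give $+Q$, and conjugation by the other two Paulis each give $-Q$, so the sum is $Q+Q-Q-Q=0$.
\end{itemize}
By linearity, the twirl sum equals $\tfrac12 c_I\cdot 4I=2\Tr(\rho)I_2$.

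Now take $\rho$ to be a density operator, so $\Tr(\rho)=1$. The twirl identity gives $X\rho X+Y\rho Y+Z\rho Z=2I_2-\rho$. Substituting into the right-hand side of Eq.~\eqref{aeq:pauli-representation} yields
\[
\Bigl(1-\tfrac{3b}{2}\Bigr)\rho+\tfrac b2\,(2I_2-\rho)=(1-2b)\rho+bI_2,
\]
which is exactly $\D_b(\rho)$.

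This also settles the CPTP claim stated with Eq.~\eqref{eq:pauli-representation}. For $0\le b\le 2/3$ all four coefficients $1-3b/2$ and $b/2$ are nonnegative, and they sum to $1$. So Eq.~\eqref{aeq:pauli-representation} presents $\D_b$ as a convex combination of unitary conjugations, that is, a random-unitary channel.

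One caveat concerns inputs that are not trace-one. The term $bI_2$ in the definition of $\D_b$ should then be read as $b\Tr(\rho)I_2$, and with that reading the same computation goes through verbatim. This is the version needed when $\D_b$ is applied to operators such as $H$ in Lemma~\ref{cor:response-energy}.
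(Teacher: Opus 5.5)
Your proof is correct and is essentially the paper's argument: both use the twirl identity $\rho+X\rho X+Y\rho Y+Z\rho Z=2I_2$ for trace-one $\rho$ and substitute it into the right-hand side to recover $(1-2b)\rho+bI_2$. You go slightly further by proving the twirl identity via the Pauli expansion, which the paper only asserts. You also rightly note that for non-trace-one inputs the definition must be read as $(1-2b)\rho+b\Tr(\rho)I_2$; this is the reading implicitly needed for $\D_b(I)=I$ in Lemma~\ref{cor:response-energy}.
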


\begin{proof}
	The Pauli operators satisfy
	\[
	\rho+X\rho X+Y\rho Y+Z\rho Z=2I_2.
	\]
	Therefore,
	\[
	X\rho X+Y\rho Y+Z\rho Z=2I_2-\rho.
	\]
	Substituting this into \eqref{aeq:pauli-representation} gives
	$$
	\left(1-\frac{3b}{2}\right)\rho
	+\frac b2(2I_2-\rho)
	=(1-2b)\rho+bI_2
	=\D_b(\rho).
	$$
\end{proof}

\begin{lemma}\label{lem:qubit_cptp}
	For $0\le b\le2/3$, the map $\D_b$ is CPTP.
\end{lemma}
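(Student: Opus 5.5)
We need to show that for $0\le b\le 2/3$ the map $\D_b(\rho)=(1-2b)\rho+bI_2$ is completely positive and trace preserving. The plan is to check trace preservation by direct computation and to get complete positivity from the random-unitary (Pauli) form of Lemma~\ref{alem:qubit_pauli}.

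\emph{Trace preservation.} For any $2\times 2$ operator $\rho$,
\[
\Tr\D_b(\rho)=(1-2b)\Tr\rho+b\Tr I_2=(1-2b)\Tr\rho+2b\,\Tr\rho .
\]
This uses $\Tr I_2=2$ together with $\Tr\rho=1$ for density operators. For general operators, linearity forces us to read $bI_2$ as $b\Tr(\rho)I_2$, which is the only linear extension of Eq.~\eqref{eq:depol-definition}. With that reading the right-hand side equals $\Tr\rho$, so $\D_b$ is trace preserving. The Pauli identity $\rho+X\rho X+Y\rho Y+Z\rho Z=2\Tr(\rho)I_2$ used in Lemma~\ref{alem:qubit_pauli} also holds for arbitrary $\rho$ with the trace included. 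Hence Eq.~\eqref{aeq:pauli-representation} is a valid linear identity on all operators.

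\emph{Complete positivity.} By Eq.~\eqref{aeq:pauli-representation}, $\D_b$ has the Kraus form $\D_b(\rho)=\sum_j K_j\rho K_j^\dagger$ with
\[
K_0=\sqrt{1-\tfrac{3b}{2}}\,I,\qquad K_1=\sqrt{\tfrac b2}\,X,\qquad K_2=\sqrt{\tfrac b2}\,Y,\qquad K_3=\sqrt{\tfrac b2}\,Z .
\]
These are real square roots exactly when $1-3b/2\ge0$ and $b\ge0$, which is the condition $0\le b\le 2/3$. Any map of Kraus form is completely positive. To see this, apply $\D_b\otimes\mathrm{id}$ to a positive operator $\sigma$ on the qubit plus an arbitrary ancilla. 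This gives $\sum_j (K_j\otimes I)\sigma(K_j\otimes I)^\dagger$, a sum of positive operators.

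As a consistency check, the Kraus operators satisfy $\sum_j K_j^\dagger K_j=(1-3b/2+3b/2)I=I$, which independently reconfirms trace preservation.

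No step here is a genuine obstacle. The only point needing care is that the defining formula must be read linearly, with $\Tr(\rho)I_2$ in place of $I_2$, when testing complete positivity on operators that are not normalized states.
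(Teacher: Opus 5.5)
Your proof is correct and follows the paper's argument. Both read off from the Pauli representation of Lemma~\ref{alem:qubit_pauli} that $\D_b$ is a mixture of the unitary channels $I,X,Y,Z$ with weights $1-\tfrac{3b}{2},\tfrac b2,\tfrac b2,\tfrac b2$; these are nonnegative exactly when $0\le b\le 2/3$ and sum to $1$. You also extend the defining formula linearly as $(1-2b)\rho+b\Tr(\rho)I_2$, so that the twirl identity reads $\rho+X\rho X+Y\rho Y+Z\rho Z=2\Tr(\rho)I_2$; this is a correct piece of care that the paper leaves implicit.
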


\begin{proof}
	By Lemma~\ref{alem:qubit_pauli},
	\[
	\D_b(\rho)
	=
	\left(1-\frac{3b}{2}\right)\rho
	+\frac b2\left(X\rho X+Y\rho Y+Z\rho Z\right).
	\]
	The four coefficients are non-negative for $0\le b\le2/3$ and sum to $1$. Hence $\D_b$ is a convex combination of unitary channels and is CPTP.
\end{proof}